\newcommand{\red}[1]{{\textcolor{black}{#1}}}
\DeclarePairedDelimiter\floor{\lfloor}{\rfloor}
\newtheorem{thm}{Theorem}[section]
\newtheorem{lem}[thm]{Lemma}
\theoremstyle{definition}
\newtheorem{defn}[thm]{Definition}
\theoremstyle{remark}
\newtheorem{rem}[thm]{Remark}
\numberwithin{equation}{section}       
\newcommand{\RR}{\mathbb{R}}                
\newcommand{\Pie}{\mathbb{P}\mathrm{iez}}    
\newcommand{\Piez}{\mathcal{P}\mathrm{iez}}    
\newcommand{\Ela}{\mathbb{E}\mathrm{la}}    
\newcommand{\HH}{\mathbb{H}}                
\newcommand{\Sym}{\mathbb{S}} 
\newcommand{\SO}{\mathrm{SO}}               
\newcommand{\OO}{\mathrm{O}}                
\newcommand{\ZZ}{\mathbb{Z}}   
\newcommand{\DD}{\mathbb{D}}                
\newcommand{\octa}{\mathbb{O}}              
\newcommand{\ico}{\mathbb{I}}               
\newcommand{\tetra}{\mathbb{T}}             
\newcommand{\triv}{\mathds{1}}              
\newcommand{\fix}{\mathbb{F}\mathrm{ix}}
\newcommand{\vR}{\mathbf{r}}
\newcommand{\uu}{\mathbf{u}}              
\newcommand{\ve}{\mathbf{e}}
\newcommand{\vv}{\mathbf{v}}
\newcommand{\mm}{\mathbf{m}}
\newcommand{\ba}{\mathbf{a}}
\newcommand{\bnn}{\pmb{n}}                
\newcommand{\bs}{\mathbf{s}}	
\newcommand{\dT}[1]{\mathrm{\mathbf{#1}}}
\newcommand{\dTs}[1]{\mathrm{\boldsymbol{#1}}}
\newcommand{\ben}{\begin{equation*}}
\newcommand{\een}{\end{equation*}}
\newcommand{\ban}{\begin{eqnarray*}}
\newcommand{\ean}{\end{eqnarray*}}
\newcommand{\bT}{\mathbf{T}}			
\newcommand{\bP}{\mathrm{\mathbf{P}}}			
\newcommand{\bC}{\mathrm{\mathbf{C}}}			
\newcommand{\bS}{\mathrm{\mathbf{S}}}			
\newcommand{\bH}{\mathrm{\mathbf{H}}}			
\newcommand{\bI}{\mathrm{\mathbf{I}}}			
\newcommand{\Idd}{\mathrm{Id}}			
\newcommand{\MaP}{\left[\bP\right]}		
\newcommand{\tr}{\text{tr}}
\newcommand{\set}[1]{\left\{#1\right\}}                 
\begin{document}

\title{Symmetry classes in piezoelectricity from second-order symmetries}

\author{M. Olive}
\address[Marc Olive]{Université Paris-Saclay, ENS Paris-Saclay, CNRS,  LMT - Laboratoire de Mécanique et Technologie, 91190, Gif-sur-Yvette, France}
\email{marc.olive@math.cnrs.fr}

\author{N. Auffray}
\address[Nicolas Auffray]{Univ Gustave Eiffel, CNRS, MSME UMR 8208, F-77454 Marne-la-Vallée, France}
\email{nicolas.auffray@univ-eiffel.fr}

\subjclass[2010]{74E10 (74F99 20C35)}%
\keywords{Piezoelectrity tensor; Symmetry classes; Normal forms}%

\date{\today}

\maketitle

\begin{abstract}
The piezoelectricity law is a constitutive model that describes how mechanical and electric fields are coupled within a material. In its linear formulation this law comprises three constitutive tensors of increasing order:  the second order permittivity tensor $\bS$, the third order piezoelectricity tensor $\bP$ and the fourth-order elasticity tensor $\bC$. 
In a first part of the paper, the symmetry classes of the piezoelectricity tensor alone are investigated. Using a new approach based on the use of the so-called \emph{clips operations}, we establish the 16 symmetry classes of this tensor and provide their associated \emph{normal forms}. Second order orthogonal transformations (plane symmetries and $\pi$-angle rotations) are then used to characterize and classify directly 11 out of the 16 symmetry classes of the piezoelectricity tensor. An additional step to distinguish the remaining classes is proposed.
\end{abstract}



\section*{Introduction}

Originally initiated in the field of crystallography, the study of the links between spatial invariances of physical phenomena and the invariances of the underlying matter have now spread all over engineering sciences. The tools initially introduced by Curie~\cite{Cur1894}  are central to contemporary materials science where materials are designed for specific applications \cite{AB03,BKP+09,PMT+19,YCS20}. These materials that can be termed composites, architectured or meta have in common the fact that their internal geometry is specifically designed to produce, or inhibit, physical couplings within the matter. Among the characteristics of an architectured material that can be tailored is the anisotropy of the different physical properties. The anisotropy of a physical property is the angular variation of this property with respect to material directions.  It is important for applications to understand how the physical anisotropy relates to the symmetries of the material. This understanding helps to select the right material for a specific technical application or to design it if it appears that no bulk material with the required properties is available \cite{SFK97,YCS20}.
In continuum physics, physical properties are described by the means of constitutive law which describe how the matter reacts to physical fields  \cite{ZB1994}.  In linear physics,  constitutive laws are modelled using tensors and the questions related to their anisotropy can be formulated in the language of the group representation theory~\cite{FH1991,Ste1994}.  \\

The classical approach to study the anisotropy of a constitutive tensor law is inherited from crystal physics \cite{Nye85}. It consists in studying and classifying the consequences of the invariances imposed by the \red{32} crystallographic point groups \red{(\emph{i.e.} subgroups of the orthogonal group that can preserve a lattice, see~\cite[Appendix A]{Ste1994} for instance)} on the matrix representation of a constitutive tensor. This approach has been automated and there are on-line applications that produce the desired results for any tensor once its order and index symmetries have been specified \cite{GEE+19}.
However, the crystallographic  approach to the  classification of tensorial anisotropies suffers from two important flaws:
\begin{enumerate}
\item  some subgroups of $\OO(3)$ are not crystallographic,  the associated physical anisotropies are absent from the classification\footnote{For instance, the five-fold rotation invariance is not crystallographic since incompatible with the translation lattice, but it can anyway describe the anisotropic properties of some constitutive tensors, see for instance \cite{ADR15}.};
\item different crystallographic groups can lead to apparently different operators which are, in fact, identical modulo an isometric transformation. 
\end{enumerate}
As a result it appears that, for a given tensor, the crystallographic approach can, in the same time, omit some type of anisotropy and give different names to tensors that, in fact, have the same type of anisotropy.  Despite the limitations just mentioned, this approach is very classical and is detailed in many reference monographs \cite{Sch54, Nye85}.  The consequences are confusions which still populate recent and important  publications such as \cite{JCG+15} in which results for a spurious class are provided, while another (crystallographic) class is missing.  All this illustrates the fact that the crystallographic approach is not the right framework for formulating the problem.

An appropriate formalism for dealing with the anisotropy of tensor spaces was set up by Forte--Vianello~\cite{FV1996} in an article devoted to the elasticity tensor. In this reference, the notion of a tensor \emph{symmetry class} is defined for the first time in mechanics. Using a geometric approach,  the authors demonstrated that the 10 anisotropic systems classically considered by crystallographers \cite{Sch54,Roy99} reduces to only 8 distinct symmetry classes, 2 crystallographic systems being fictitious. This approach was a true change of paradigm since, for the first time, the classification problem is formulated with respect to the vector space of constitutive tensors and without referring to any crystallographic system. Since then the Forte-Vianello method has been successfully applied to other constitutive tensor spaces~\cite{FV1997,GW2002,Wel2004,LQH2011}. 
For instance, in the case of the third-order piezoelectricity tensor, the 17 anisotropic systems obtained by the crystallographic approach \cite{Sch54,Roy99}  reduce to 16 symmetry classes \cite{Wel2004}.

In its original setting, the Forte-Vianello approach requires rather fine calculations and reasoning to establish the classification. This complexity make difficult its application to more involved situations, such as constitutive tensors of order greater than 4 or  coupled constitutive laws  involving a family of constitutive tensors. The piezoelectricity  which describes how mechanical and electrical fields interact within the matter is the archetype of such a coupled law. In its linear formulation, this behaviour involves three constitutive tensors of increasing order:  the second-order permittivity tensor $\bS$, the third-order piezoelectricity tensor $\bP$ and the fourth-order elasticity tensor $\bC$ \cite{Lan1984,Roy99}.

The symmetry classes of the piezoelectricity tensor $\bP$ has already been obtained~\cite{GW2002,Zou2013} with strategies that are difficult to generalize.  We propose here an original and general approach using the \emph{clips operations}, as introduced in~\cite{Oli2019}, in order to directly obtain the 16 classes of symmetry sought. In contrast to the Forte-Vianello approach, this method extends directly to more complicated situations. Aside to this classification we also provide:
\begin{enumerate}
\item two different explicit harmonic decompositions~\cite{Bac1970,FV1996} of the piezoelectricity tensor;
\item the normal forms of the piezoelectricity tensor for each of the 16 symmetry classes, given in a Kelvin representation.
\end{enumerate}
Not wanting to limit ourselves to these already known results, we also propose a classification of the piezoelectricity symmetry classes using orthogonal transformations of order $2$, following what has been done in the case of elasticity~\cite{FGB1998,CVC2001}.  In these references which concern the elasticity tensor (even-order), only symmetry planes were considered. In the present contribution, in order to treat odd-order tensors, the method is extended to both \emph{plane symmetries} and \emph{axial symmetries} (i.e rotations of order two). In contrast to the elasticity tensor, for which all symmetry classes can be identified from symmetry planes, 11 of the 16 symmetry classes of the piezoelectric tensor can be directly identified by this approach, and an additional step to distinguish the remaining classes is then proposed.

\subsection*{Organization of the paper} In \autoref{s:Piez}, the mathematical formulation of the physical coupling between electricity and elasticity is recalled. In this section,  the definitions of tensor symmetry groups and classes are introduced and detailed in the context of the piezoelectricity coupling.  In  \autoref{s:Harm_Decomposition}, the harmonic decomposition of the piezoelectricity tensors is introduced. This decomposition, which is the generalization of the decomposition of second order symmetric tensors into spherical and deviatoric parts, is important to understand the physics described by the model. Two particular explicit harmonic decompositions of the space of piezoelectricity tensors are proposed.  
In the following section (\autoref{s:Sym_Class_Normal_Forms}), the existence of the 16 symmetry classes for the piezoelectric tensor is established using clips operations. In \autoref{s:Sym_Classes_from_Second_Order_Sym}, two-order symmetries are used to obtain a direct characterization of 11 out of 16 piezoelectricity symmetry classes. A method to distinguish the remaining classes is proposed. The article is supplemented by three annexes. In~\autoref{sec:O3_closed_subgroups}, the main definitions of the closed $\OO(3)$-subgroups are recalled so to have a better idea of the physical content of our results. Then~\autoref{sec:Norm_Form} is devoted to the explicit matrix expression of normal form in each symmetry class. The \autoref{app:Dim_Fixed_Point} provides some useful formulae about dimension of spaces associated to normal forms. 

\section{Piezoelectricity}\label{s:Piez}

In this section, the  behaviour of a linear piezoelectric solid is detailed. Readers interested in a more complete introduction can refer to~\cite{Sch54,Lan1984,Roy99,Ieee88}.

\subsection{Linear piezoelectricity from electro-mechanical coupling}\label{subsec:Lin_Space_Piez}

In the context of the infinitesimal strain theory, the mechanical state of a material is characterized by two fields of symmetric second-order tensors: the Cauchy stress tensor field $\dTs{\sigma}$ and the infinitesimal strain tensor field $\dTs{\varepsilon}$. These two fields are linked by a constitutive law,  which describes the mechanical behaviour of a specific material over a limited range of external parameters.  For linear elasticity, which is a particular constitutive model, the relation which is  pointwise and linear can be written
\begin{equation*}
\dTs{\sigma}=\dT{C}:\dTs{\varepsilon},\quad \sigma_{ij}=C_{ijlm}\varepsilon_{lm},
\end{equation*}
in which $\dT{C}$ is a fourth-order tensor, known as the \emph{elasticity tensor}~\cite{Gur73,MH1994}.

In the same way, the \emph{electrical} state is described by two vector fields: the electric displacement  $\dT{d}$ and the electric field $\dT{e}$. As in the mechanical situation, these fields are connected by a constitutive law that describes the behaviour of each different material.  For linear conductivity, this relation which is  pointwise and linear can be written
\begin{equation*}
\dT{d}=\dT{S}\cdot\dT{e},\quad d_{i}=S_{im}e_m,
\end{equation*}
in which $\dT{S}$ is a second-order tensor, known as the  \emph{permittivity tensor}~\cite{Lan1984}.
For non-centro symmetric materials these two phenomena are not independent but coupled~\cite{Cur1894,Lan1984}. In this situation the constitutive law reads \cite{Roy99,Wel2004}
\begin{equation}\label{eq:Coupled_Const_Law}
\begin{cases}
\dTs{\sigma}=\dT{C}:\dTs{\varepsilon}-\dT{e}\cdot\dT{P} \\
\dT{d}=\dT{P}:\dTs{\varepsilon}+\dT{S}\cdot\dT{e}
\end{cases}
\end{equation}
in which a third-order tensor $\dT{P}$, known as the \emph{piezoelectricity tensor}, responsible for the coupling appears\footnote{Depending on the considered set of primary variables, four different conventions can be used to express the law of piezoelectricity \cite{Ieee88,Roy99}. The one chosen here is regarded as the most general according to the \emph{IEEE Standard on Piezoelectricity} \cite{Ieee88}. In any case, the results of the present article are essentially independent of the chosen convention (except for the explicit harmonic decompositions).}.
With respect to an orthonormal basis of $\RR^{3}$ the coupled constitutive law can be expressed as follows
\begin{equation}\label{eq:SGE}
\begin{cases}
\sigma _{ij}=C_{ijlm}\varepsilon _{lm}-P_{mij}e_{m} \\ 
d_{i}=P_{ilm}\varepsilon _{lm}+S_{im}e_{m}
\end{cases}.
\end{equation}
At any material point, the linear electromechanical behaviour is defined by a triplet $\bm{\mathcal{P}}$ of constitutive tensors
\begin{equation*}
\bm{\mathcal{P}}:=(\dT{C},\dT{P},\dT{S})\in \Ela\oplus \Pie\oplus \mathbb{S}^2
\end{equation*}
in which:
\begin{enumerate}
\item $\Ela$ is the 21 dimensional vector space of elasticity tensors: 
\begin{equation*}
	\Ela:=\set{\dT{C}\in\otimes^{4}(\RR^3),\quad C_{ijkl}=C_{jikl}=C_{ijlk}=C_{klij}}.
\end{equation*}
\item $\Pie$ is the 18 dimensional vector space of piezoelectricity tensors:
\begin{equation*}
	\Pie:=\set{\dT{P}\in\otimes^{3}(\RR^3),\quad P_{ijk}=P_{ikj}}.
\end{equation*}
\item $\mathbb{S}^2$ is the 6 dimensional vector space of permittivity tensors:
\begin{equation*}
	\mathbb{S}^2:=\set{\mathbf{S}\in\otimes^2(\RR^3),\quad S_{ij}=S_{ji}}.
\end{equation*}
\end{enumerate}

The space of piezoelectricity law will simply be denoted by $\Piez$ in the following, with $\Piez$ being defined as
\ben
\Piez=\Ela\oplus \Pie\oplus \mathbb{S}^2.
\een

\subsection{Symmetry groups and symmetry classes}

Consider a homogeneous piezoelectric material oriented in some way with respect to a given and fixed reference, for instance a testing device. With respect to this reference, its electromechanical behaviour is characterized by a triplet $\bm{\mathcal{P}}$ of tensors
\begin{equation*}
\bm{\mathcal{P}}:=(\dT{C},\dT{P},\dT{S})\in\Piez.
\end{equation*}
Consider an orthogonal transformation $g\in \OO(3)$ acting on the material. The electromechanical behaviour of the material in its new configuration is characterized by another triplet $\overline{\bm{\mathcal{P}}}:=(\overline{\dT{C}},\overline{\dT{P}},\overline{\dT{S}})$ defined in coordinates by:
\begin{align}\label{eq:New_Tensors}
	\overline{C}_{ijkl}&:=g_{ip}g_{jq}g_{kr}g_{ls}C_{pqrs},\quad \overline{P}_{ijk}:=g_{ip}g_{jq}g_{kr}P_{pqr},\quad
	\overline{S}_{ij}:=g_{ip}g_{jq}S_{pq}.
\end{align}
Obviously, the physical nature of the piezoelectric material is not affected by this transformation, only its constitutive tensors are transformed.
Each of these transformations corresponds to a specific case of a \emph{linear representation} of the group $\OO(3)$~\cite{IG1984,Ste1994}, so that the equations \eqref{eq:New_Tensors} can  be recast as:
\begin{equation}\label{eq:Rep_Tensorielles}
	\overline{\dT{C}}=g\star \dT{C},\quad \overline{\dT{P}}=g\star \dT{P},\quad \overline{\dT{S}}=g\star \dT{S},
\end{equation}
in which the \emph{standard} linear representation of $\OO(3)$ on the space $\otimes^{n}(\RR^{3})$ of $n$th-order tensors is given in any orthonormal basis by
\begin{equation}\label{eq:standard_rep}
	(g\star \bT)_{i_1\dotsc i_n}:=g_{i_1j_1}\dotsc g_{i_nj_n}T_{j_1\dotsc j_n},\quad g\in \OO(3),\quad \bT\in \otimes^{n}(\RR^{3}).
\end{equation}
\begin{rem}
There is another linear representation of $\OO(3)$ on the space of $n$th order tensors, called the \emph{twisted} representation
\begin{equation}\label{eq:twist_rep}
g\: \hat{\star}\: \bT:=\det(g) (g\star \bT)
\end{equation}
where such tensor $\bT$ is said to be a \emph{pseudo--tensor}. Note that such a twisted representation is used in the following section (\autoref{s:Harm_Decomposition}).
\end{rem}

The possible \emph{anisotropies} of a constitutive law are modelled on the \emph{symmetry classes} of the associated representation. First of all, for a given constitutive tensor $\bT$, its \emph{symmetry group} is defined as the set of orthogonal transformations letting $\bT$ invariant:
\begin{equation*}
G_{\bT}:=\set{g\in \OO(3),\quad g \star \bT=\bT}.
\end{equation*}
For any constitutive tensor $\overline{\bT}=g\star \bT$ following a new orientation given by $g\in \OO(3)$, the symmetry group $G_{\overline{\bT}}$ of $\overline{\bT}$ is related to $G_{\bT}$ by
\begin{equation*}
G_{\overline{\bT}}=gG_{\bT}g^{-1}.
\end{equation*}
Thus, the type of anisotropy of a tensor is not described by its symmetry group, which refers to a specific orientation, but rather by its symmetry class
\begin{equation*}
[G_{\bT}]:=\set{gG_{\bT}g^{-1},\quad g\in \OO(3)}
\end{equation*}
which is  the \emph{conjugacy class} of its symmetry group (the set of all subgroups conjugate to $G_{\bT}$). 

In the following, attention will be restricted from the complete constitutive law to only its coupling component encoded by $\dT{P}\in\Pie$. Equivalent studies and results for $\Ela$ and $\mathbb{S}^2$ can be found in other references (see~\cite{FV1996,AKP2014} for instance). From many points of view, knowing the number and the type of symmetry classes of given tensor space is an important questions~\cite{FV1996,FV1997,LQH2011,OA2014,OA2014a}. 

In the next two sections, the basic tools needed to decide these issues for $\Pie$ will be introduced and applied: the first tool is the \emph{harmonic decomposition} and the second one are the \emph{clips operations}.

\section{Harmonic decomposition of the space of piezoelectricity tensors}\label{s:Harm_Decomposition}

As a very classical result, any second-order symmetric tensor can be decomposed into a deviatoric part (which is symmetric and traceless) and a spherical part. The generalization of this decomposition to tensor spaces of any order is known as the \emph{harmonic decomposition}~\cite{Bac1970,FV1996}, where we need to introduce the spaces of higher-order deviators, that are the spaces $\HH^n$ of $n$th-order harmonic tensors (definition~\ref{def:Harm_Tensors}). 
To define these spaces,  let us first introduce the space $\Sym^{n}$ of $n$th-order totally symmetric tensors on $\RR^{3}$. Take now $\tr(\bS)$ to be the totally symmetric tensor of order $n-2$ in which the contraction on the first two indices of $\bS$ has been achieved\footnote{Since the contraction is on a fully symmetric tensor, it does not matter which indices are contracted. First indices are just considered here for  simplicity.}. 
\begin{defn}\label{def:Harm_Tensors}
	An $n$th-order totally symmetric and \emph{traceless} tensor will be called a \emph{harmonic tensor} and the subspace of $\Sym^{n}$ of harmonic tensors will be denoted by $\HH^{n}$. On  $\RR^3$, it is a $2n+1$ dimensional vector space. 
\end{defn}
Each space of harmonic tensors inherits the standard and the twisted $\OO(3)$-representations, respectively given by \eqref{eq:standard_rep} and~\eqref{eq:twist_rep}. To make a clear distinction between these two representations, the notation $\HH^{n}$ refers to space of $n$th-order harmonic tensors endowed with the standard representation, while $\HH^{\sharp n}$ refers to the same space endowed with the twisted representation. Each of these representations are \emph{irreducible}, meaning that there is no non--trivial stable subspace~\cite{IG1984}.
The harmonic structure of $\Pie$ can easily be revealed by noting that 
\begin{equation*}
	\Pie\simeq \HH^{1}\otimes(\HH^2\oplus\HH^0)
\end{equation*}
and using Clebsch-Gordan product for decomposing tensorial product of irreducible spaces into a sum of irreducible ones \cite{JCB1978}. It results that
\begin{equation*}
\Pie\simeq \HH^{3}\oplus \HH^{\sharp 2}\oplus \HH^{1}\oplus \HH^{1}.
\end{equation*}

Let consider now explicit harmonic decompositions of $\Pie$, meaning explicit isomorphisms from $\Pie$ to $\HH^{3}\oplus \HH^{\sharp 2}\oplus \HH^{1}\oplus \HH^{1}$. Such explicit isomorphisms are not unique and different decomposition can be found in the literature \cite{Spe70,Wel2004,ZZD+01,Zou2013}. Here two explicit decompositions will be provided
\begin{enumerate}
\item The \emph{Clebsch-Gordan} harmonic decomposition \cite{AAD20}, which consists in decomposing the piezoelectricity tensor in two parts: one generating an electric displacement from a deviatoric strain, the other one generating an electric displacement from a hydrostatic strain. These blocks are then decomposed into irreducible harmonic tensors. The  specific expressions are given here by~\eqref{eq:Cov_Lienaire_Piezo} and~\eqref{eq:Cov_Harm3_Piezo}. This decomposition which possesses a  clear physical content is interesting for physical applications and corresponds, up to some scaling factors,  to the one proposed in \cite{Zou2013}. 
\item The \emph{Schur-Weyl}  harmonic decomposition, consists in first decomposing the piezoelectricity tensor according to its index symmetries before proceeding to the harmonic decomposition of each part. This decomposition follow the lines of the method used by Backus in the case of the elasticity tensor \cite{Bac1970}. This decomposition will find interesting application in \autoref{s:Sym_Classes_from_Second_Order_Sym} for symmetry classes identification.
\end{enumerate}

\begin{rem}
Since the third-order piezoelectricity tensor has a similar structure to the third-order strain-gradient elasticity tensor that appears in Mindlin Strain Gradient Elasticity model~\cite{Min1964}, numerous results concerning its harmonic decomposition can be found in this literature, for instance \cite{HF97,LYC+03,Auf15,Laz16}.
It has to be noted that in this literature the Schur-Weyl decomposition, which corresponds to the Mindlin type III formulation of Strain Gradient Elasticity, is the most natural.
\end{rem}

\subsection{Clebsch--Gordan decomposition of piezoelectricity tensors}
Let  us consider some piezoelectricity tensor $\bP\in \Pie$ and let us first define $(\bH,\ba,\uu,\vv)\in \HH^{3}\oplus \HH^{\sharp 2}\oplus \HH^{1}\oplus \HH^{1}$ by:
\begin{equation}\label{eq:Cov_Lienaire_Piezo}
a_{ij}:=\frac{1}{2}\left(\epsilon_{ipq}P_{pqj}+\epsilon_{jpq}P_{pqi}\right),\quad u_i:=P_{ppi}-\dfrac{1}{3}P_{ipp},\quad 
v_i:=P_{ipp}
\end{equation}
and 
\begin{equation}\label{eq:Cov_Harm3_Piezo}
H_{ijk}:=P_{ijk}-\frac{1}{3}\left(\epsilon_{ijp}a_{pk}+\epsilon_{ikp}a_{pj}\right)-\frac{3}{10}\left(u_{j}\delta_{ik}+u_{k}\delta_{ij}-\frac{2}{3}u_{i}\delta_{jk} \right)-\frac{1}{3}v_{i}\delta_{jk}.
\end{equation}
From this, the piezoelectricity tensor $\bP$ can be decomposed into 
\begin{equation}\label{eq:Piez_Dev_Sph}
	\bP=\bP^{(1,2)}+\bP^{(1,0)}
\end{equation}
with
\begin{eqnarray*}
	P^{(1,2)}_{ijk}&:=&H_{ijk}+\frac{1}{3}\left(\epsilon_{ijp}a_{pk}+\epsilon_{ikp}a_{pj}\right)+\frac{3}{10}\left(u_{j}\delta_{ik}+u_{k}\delta_{ij}-\frac{2}{3}u_{i}\delta_{jk} \right),\\
	P^{(1,0)}_{ijk}&:=&\frac{1}{3}v_{i}\delta_{jk},
\end{eqnarray*}
where the notation $\bP^{(p,q)}$ indicates a linear map from $\HH^{q}$ to $\HH^{p}$.
Finally, we claim that the map
\begin{equation}
	\phi\: : \: \bP\in \Pie \mapsto (\bH,\ba,\uu,\vv)\in \HH^{3}\oplus \HH^{\sharp 2}\oplus \HH^{1}\oplus \HH^{1}
\end{equation}
is an \emph{equivariant isomorphism}, where equivariance means that
\begin{equation*}
\forall g\in\OO(3),\quad	\phi(g\star \bP)=g\star(\bH,\ba,\uu,\vv)=(g\star \bH,g \: \hat{\star} \: \ba,g\star \uu,g\star \vv).
\end{equation*}
Note that an explicit inverse linear map of $\phi$ is given by 
\begin{multline*}
	(\bH,\ba,\uu,\vv)\in \HH^{3}\oplus \HH^{\sharp 2}\oplus \HH^{1}\oplus \HH^{1}\mapsto \bP\in \Pie,\\
			P_{ijk}:=H_{ijk}+\frac{1}{3}\left(\epsilon_{ijp}a_{pk}+\epsilon_{ikp}a_{pj}\right)+\frac{3}{10}\left(u_{j}\delta_{ik}+u_{j}\delta_{ik}-\frac{2}{3}u_{i}\delta_{jk} \right)+\frac{1}{3}v_{i}\delta_{jk}.
\end{multline*}
The decomposition~\eqref{eq:Piez_Dev_Sph} has a clear physical content. Indeed, the tensor $\bP^{(1,2)}$ is the part of $\bP$ which generates an electric displacement from a deviatoric strain $\dTs{\varepsilon}^{d}$, while $\bP^{(1,0)}$ is the electric displacement generated from a hydrostatic strain $\dTs{\varepsilon}^{h}$ with
\begin{equation}
	\dTs{\varepsilon}^{h}=\frac{1}{3}\left(\dTs{\varepsilon}:\bI\right)\bI\quad;\quad \dTs{\varepsilon}^{d}=\dTs{\varepsilon}-\dTs{\varepsilon}^{h},
\end{equation}
and $\bI$ is the second order identity tensor. 

This way to decompose the piezoelectricity tensor is interesting for optimal design applications \cite{NLZ+16}. For instance, an effective material for which $\bP^{(1,2)}=0$, is a material for which the electromechanical coupling only show off in case of hydrostatic strain.  

\subsection{Schur--Weyl decomposition of piezoelectricity tensors}
Let us take back here piezoelectricity tensor $\bP\in \Pie$ and define $\ba\in \HH^{\sharp 2}$ using the same formula~\eqref{eq:Cov_Lienaire_Piezo}. We then consider
\begin{equation}\label{eq:Vect_Schur_Weyl}
u'_i:=\frac{1}{3}\left(P_{ipp}+2P_{ppi}\right),\quad v'_i:=\frac{1}{3}\left(P_{ppi}-P_{ipp}\right)
\end{equation}
and
\begin{equation*}
H_{ijk}:=\frac{1}{3}\left(P_{ijk}+P_{kij}+P_{jik}\right)-
\frac{1}{5}\left(\delta_{ij}u'_k+\delta_{ik}u'_j+\delta_{jk}u'_i\right)
\end{equation*}
where such third order harmonic tensor is exactly the same as the one defined by~\eqref{eq:Cov_Harm3_Piezo}.
From this, piezoelectricity tensor $\bP$ can decomposes into 
\begin{equation}\label{eq:Piez_GL}
	\bP=\bP^{s}+\bP^{r}
\end{equation}
with
\begin{eqnarray*}
	P^{s}_{ijk}&:=&H_{ijk}+\frac{1}{5}\left(\delta_{ij}u'_k+\delta_{ik}u'_j+\delta_{jk}u'_i\right),\\
	P^{r}_{ijk}&:=&\frac{1}{3}\left(\epsilon_{ijp}a_{pk}+\epsilon_{ikp}a_{pj}\right)-\frac{1}{2}\left(2\delta_{jk}v'_i-\delta_{ij}v'_k-\delta_{ik}v'_j\right).
\end{eqnarray*}
where $\bP^{s}$ is a completely symmetric tensor and $\bP^{r}$ is a remainder. Now the linear map
\begin{equation*}
\bP\in \Pie\mapsto (\bH,\ba,\uu',\vv')\in \HH^{3}\oplus \HH^{\sharp 2}\oplus \HH^{1}\oplus \HH^{1}
\end{equation*}
is an equivariant isomorphism, with inverse map given by
\begin{multline}
(\bH,\ba,\uu',\vv')\in \HH^{3}\oplus \HH^{\sharp 2}\oplus \HH^{1}\oplus \HH^{1}\mapsto \bP\in \Pie,\\
P_{ijk}=H_{ijk}+\frac{1}{3}\left(\epsilon_{ijp}a_{pk}+\epsilon_{ikp}a_{pj}\right)
+\frac{1}{5}\left(\delta_{ij}u'_k+\delta_{ik}u'_j+\delta_{jk}u'_i\right)-\frac{1}{2}\left(2\delta_{jk}v'_i-\delta_{ij}v'_k-\delta_{ik}v'_j\right).
\end{multline}
This decomposition will find interesting application in \autoref{s:Sym_Classes_from_Second_Order_Sym} for symmetry class identification.
\begin{rem}\label{rem:Different_Piez_Harm_Decomp}
	We gave here two explicit isomorphisms
	\begin{equation*}
	\Pie\simeq \HH^3\oplus\HH^{\sharp 2}\oplus \HH^{1}\oplus \HH^{1}
	\end{equation*}
	and we know that any linear isomorphism
	\begin{equation*}
	(\uu,\vv)\in \HH^{1}\oplus \HH^{1} \mapsto (\tilde{\uu},\tilde{\vv})\in \HH^{1}\oplus \HH^{1}
	\end{equation*}
	can lead to another harmonic decomposition $(\bH,\ba,\tilde{\uu},\tilde{\vv})$. Other examples of such harmonic decomposition can be found in~\cite{Wel2004,Auf2017}.
\end{rem}

\section{Symmetry classes of piezoelectricity tensor space}\label{s:Sym_Class_Normal_Forms}

The geometric definition of a \emph{symmetry class} was first given by Forte--Vianello~\cite{FV1996} in the particular case of the elasticity tensor. The authors have introduced  an original approach to obtain the 8 symmetry classes of $\Ela$. Since then the method has been applied to other constitutive tensors: piezoelectricity~\cite{GW2002,Wel2004}, photoelasticity~\cite{FV1997} and flexoelectricity~\cite{LQH2011}. Since the original method was a bit tedious to apply,  a general algorithm was  proposed by the present authors in \cite{OA2013,Oli2014a,OA2014,Oli2019} to simplify the determination of symmetry classes. This method is based on the definition of \emph{clips operations} on conjugacy classes, a strategy which was initiated in the 90' by Chossat \emph{et al.}~\cite{CLM1990,CG1994,CG1996}.
We propose to get back to the determination of the symmetry classes of the piezoelectricity tensor using this approach. As will be seen, by this means,  the 16 symmetry classes of $\Pie$ are directly obtained.  For clarity of presentation, the notations and definitions of $\OO(3)$-subgroups have been moved to \autoref{sec:O3_closed_subgroups}. The classification theorem is given by:

\begin{thm}\label{thm:Sym_Class_Piezo}
\red{A conjugacy class of a closed subgroup of $\OO(3)$ is a symmetry class of a tensor $\bP\in \Pie$ if and only if it belongs to the 16 elements set of symmetry classes}:
\begin{equation*}
\{[\triv],[\ZZ_2],[\ZZ_{3}],[\DD^{v}_2],[\DD^{v}_{3}],[\ZZ^{-}_2],[\ZZ^{-}_{4}],[\DD_2],[\DD_{3}],[\DD^{h}_4],[\DD^{h}_{6}],[\SO(2)],[\OO(2)],[\OO(2)^{-}],[\octa^-],[\OO(3)]\}.
\end{equation*}
\end{thm}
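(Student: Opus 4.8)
The plan is to turn the classification into a purely group-theoretic computation by combining the harmonic decomposition of \autoref{s:Harm_Decomposition} with the clips operations. From that section we have the equivariant isomorphism
\[
\Pie \simeq \HH^{3} \oplus \HH^{\sharp 2} \oplus \HH^{1} \oplus \HH^{1},
\]
so a piezoelectricity tensor is, up to isomorphism, a quadruple $(\bH,\ba,\uu,\vv)$ whose symmetry group is the intersection $G_{\bH}\cap G_{\ba}\cap G_{\uu}\cap G_{\vv}$ of the symmetry groups of its harmonic components. Since these components can be reoriented independently, the set of symmetry classes of $\Pie$ is obtained from those of the irreducible factors by the \emph{clips product}: for two $\OO(3)$-representations $V_1,V_2$ one has $\mathcal{I}(V_1\oplus V_2)=\mathcal{I}(V_1)\odot\mathcal{I}(V_2)$, where $\mathcal{I}(V)$ denotes the set of symmetry classes of $V$, the product of two classes is $[H_1]\odot[H_2]:=\{[H_1\cap gH_2g^{-1}]\,:\,g\in\OO(3)\}$, and the product of two sets is the union of the pairwise products \cite{Oli2019}. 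This reduces the theorem to two tasks: (i) determine the symmetry classes of each irreducible factor $\HH^{1}$, $\HH^{\sharp 2}$ and $\HH^{3}$; (ii) carry out the iterated clips product.

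For task (i) I would recall the isotropy classes of the irreducible $\OO(3)$-representations from \cite{IG1984}. The polar-vector space $\HH^{1}$ has classes $\{[\OO(3)],[\OO(2)^{-}]\}$. The delicate factor is the twisted deviator $\HH^{\sharp 2}$: because the twist inserts the factor $\det(g)$, every improper element is stripped from the isotropy groups as compared with the ordinary deviator $\HH^{2}$, so its classes are the rotation groups $\{[\OO(3)],[\OO(2)],[\DD_2]\}$ (generic, transversely isotropic, aligned-axes cases). It is precisely this twist that forces $[\DD_2]$ and $[\OO(2)]$ into the final list. Finally $\HH^{3}$, being an odd harmonic, never admits the central inversion as a symmetry and carries the richest isotropy lattice, which I would tabulate as recalled.

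For task (ii) I would compute the product in stages, using the clips tables of \cite{Oli2019,CG1994,CG1996}: first $\HH^{1}\odot\HH^{1}$ for a pair of polar vectors (which already produces $[\OO(2)^{-}]$ and $[\ZZ^{-}_2]$, the latter from the mirror plane common to two non-parallel vectors), then clipping the result against $\HH^{\sharp 2}$, and finally against $\HH^{3}$. Assembling the stages and discarding duplicates should collapse to exactly the sixteen classes of the statement. Because the clips product is \emph{by construction} the exact set of symmetry groups realized by the direct sum up to conjugacy, both implications of the equivalence follow at once: no class outside the list can occur, and every listed class is genuinely attained. The explicit normal forms collected in \autoref{sec:Norm_Form} then give an independent confirmation that each of the sixteen classes is realized by a concrete tensor.

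The main obstacle is the last clips product, against $\HH^{3}$. This factor has the largest and most intricate set of isotropy classes, and each product $[H_1]\odot[H_2]$ requires enumerating the intersections $H_1\cap gH_2g^{-1}$ over all relative orientations $g$, distinguishing generic positions from coincidental alignments (shared axes, commensurable rotation orders, matching mirror planes). The two subtle points are proving \emph{completeness}, so that the case analysis of relative positions omits no intersection type, and excluding \emph{spurious} classes, so that the union closes up to exactly sixteen elements and no more.
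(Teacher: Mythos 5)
Your overall strategy coincides with the paper's: both reduce the problem to the harmonic decomposition $\Pie\simeq\HH^{3}\oplus\HH^{\sharp 2}\oplus\HH^{1}\oplus\HH^{1}$ and then compute the set of symmetry classes by iterated clips products of the isotropy classes of the irreducible summands, using the tables of \cite{Oli2019}. The order in which you clip the factors is immaterial, and your treatment of $\HH^{1}\oplus\HH^{1}$ (producing $[\OO(2)^{-}]$ and $[\ZZ_{2}^{-}]$) agrees with the paper's.

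There is, however, a genuine error in your step (i), and it propagates to the final answer. You assert that for the twisted deviator $\HH^{\sharp 2}$ ``every improper element is stripped from the isotropy groups,'' so that its classes would be $\set{[\DD_2],[\OO(2)],[\OO(3)]}$. This is false: an improper element $g$ is a symmetry of $\ba\in\HH^{\sharp 2}$ precisely when $g\star\ba=-\ba$ (since $g\,\hat{\star}\,\ba=\det(g)\,(g\star\ba)$), not never. For a traceless symmetric $\ba$ with eigenvalues $(\lambda,-\lambda,0)$, $\lambda\neq 0$, the rotation $\vR(\bnn,\pi/2)$ about the kernel direction sends $\ba$ to $-\ba$, so $-\vR(\bnn,\pi/2)$ is a twisted symmetry; it generates $\ZZ_{4}^{-}$ and, combined with the proper symmetries $\DD_2$, yields the isotropy group $\DD_{4}^{h}$. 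The correct list is therefore $\mathfrak{I}(\HH^{\sharp 2})=\set{[\DD_2],[\DD_4^h],[\OO(2)],[\OO(3)]}$, as in the paper. This omission is not harmless: in the clips computation the classes $[\DD_{4}^{h}]$ and $[\ZZ_{4}^{-}]$ of the final list arise \emph{only} from clipping $[\DD_{4}^{h}]\subset\mathfrak{I}(\HH^{\sharp 2})$ against $\mathfrak{I}(\HH^{3})$ (in particular against $[\octa^{-}]$ and $[\OO(3)]$), so your version of the computation would close up on only fourteen classes instead of sixteen. Once $\mathfrak{I}(\HH^{\sharp 2})$ is corrected, the rest of your plan carries through exactly as in the paper.
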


The \emph{normal forms} associated to the non-trivial symmetry classes are provided in \autoref{sec:Norm_Form}.

\begin{rem} 
It can be observed that the following two classes are absent from the set of  symmetry classes of $\Pie$:
\begin{itemize}
\item The class $[\SO(3)]$ is missing. This is due to the fact that there is no any pseudo-scalar in the harmonic structure of $\mathbb{P}\mathrm{iez}$, i.e. element of $\HH^{\sharp 0}$. The lacking of this class is somehow exceptional since for a generic odd order tensor space this class is present \cite{OA2014}.
\item The class $[\ZZ^{-}_{6}]$ is  missing. Note that a $\ZZ_6^-$-invariant piezoelectric tensor $\bP$ can be constructed. But some $g\in \OO(3)$ can then be found such that the symmetry class of $g\star \bP$ is exactly $[\DD^{h}_{6}]$ (see Remark~\ref{rem:Z6-form}). The absence of the class $[\ZZ^{-}_{2m}]$ with $m=2p+1$ for a $m$th-order tensor is generic \cite{OA2014}.
This result can be compared with the piezoelectricity anisotropic systems detailed for crystallographic groups, for instance in Schouten \cite{Sch54}. In these tables 17 different "matrix forms" are proposed, the extra matrix is associated to a material having $\ZZ^{-}_{6}$ for point group.
\end{itemize}
\end{rem}

The set of conjugacy classes of a compact group is a \emph{partially ordered set} (\emph{poset})~\cite{Bre1972}. For two given conjugacy classes $[H_1]$ and $[H_2]$, we write $[H_1]\preceq [H_2]$ iff $H_1\subset gH_2g^{-1}$ for some $g\in \OO(3)$. As a result, relations between the symmetry classes of piezoelectricity tensors can be graphically organized as a lattice, shown in \autoref{fig:Lattice-piezo}, in which an arrow between two classes $[H_1]\rightarrow\ [H_2]$ indicates that $[H_1]\preceq [H_2]$.
\begin{figure}[H]
  \centering
  \includegraphics[scale=0.8]{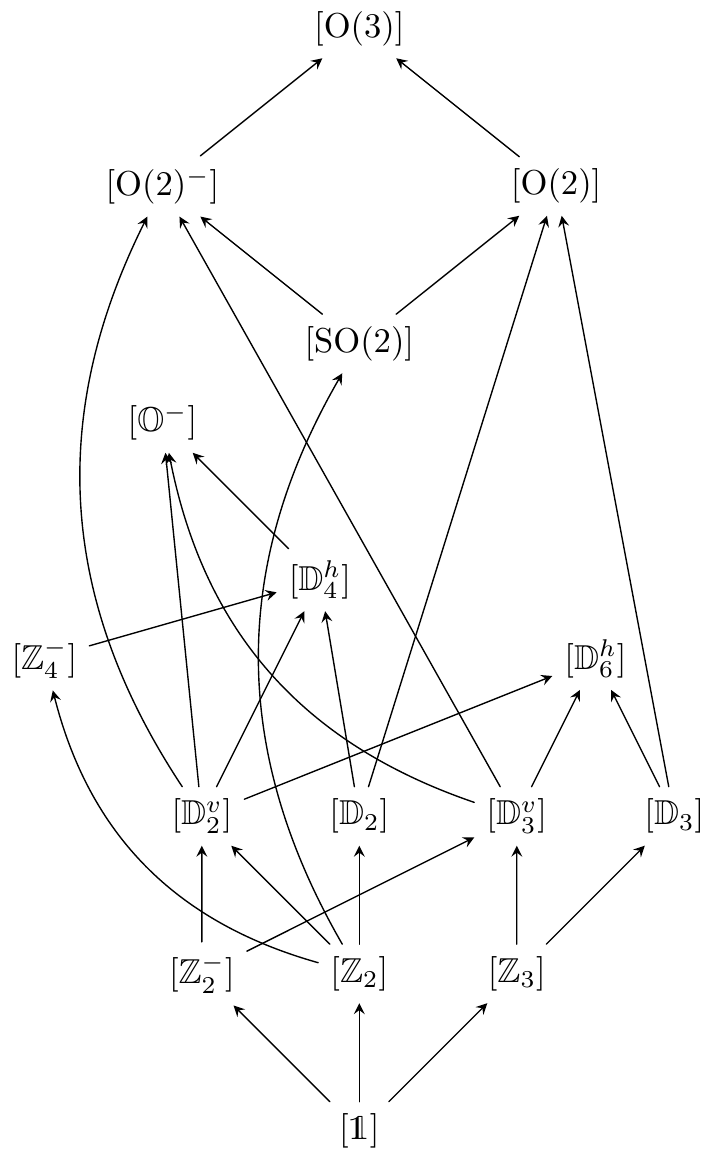}
  \caption{Lattice of symmetry classes of $\Pie$.}
  \label{fig:Lattice-piezo}
\end{figure}

To obtain the symmetry classes of $\Pie$, which is a  $\OO(3)$-representation,  \emph{clips operations} will be used \cite{OA2013,OA2014,Oli2019}. All details can be found in~\cite{OA2013,OA2014,Oli2014a} but we summarize here the main ideas.
Clips operations on conjugacy classes $[H_i]$ of closed $\OO(3)$-subgroups are defined as follows:
\begin{equation*}
	[H_1]\circledcirc[H_2]:=\bigcup_{g\in \OO(3)}H_1\cap\left(gH_2g^{-1}\right). 
\end{equation*}
For two given $\OO(3)$-representations $V_1$ and $V_2$ with sets of symmetry classes $\mathfrak{I}(V_1)$ and $\mathfrak{I}(V_2)$, the set of symmetry classes $\mathfrak{I}(V_1\oplus V_2)$ is given by 
\begin{equation*}
	\mathfrak{I}(V_1\oplus V_2)=\mathfrak{I}(V_1)\circledcirc\mathfrak{I}(V_2):=\bigcup_{[H_i]\in \mathfrak{I}(V_i)}\left( [H_1]\circledcirc[H_2]\right). 
\end{equation*}

\begin{proof}[Proof of Theorem~\ref{thm:Sym_Class_Piezo}]
We first recall the harmonic decomposition
\begin{equation*}
	\Pie\simeq \HH^3\oplus \HH^{\sharp 2}\oplus \HH^1\oplus \HH^1
\end{equation*}
and results from~\cite{IG1984} to obtain symmetry classes of each irreducible representation that appear in $\Pie$
\begin{align*}
\mathfrak{I}(\HH^1)&=\set{[\OO(2)^-],[\OO(3)]},\quad \mathfrak{I}(\HH^{\sharp 2})=\set{[\DD_2],[\DD_4^h],[\OO(2)],[\OO(3)]},\\
\mathfrak{I}(\HH^3)&=\set{[\triv],[\ZZ_2^-],[\DD_2^v],[\DD_3^v],[\DD_6^h],[\octa^-],[\OO(2)^-],[\OO(3)]}.
\end{align*}
To obtain $\mathfrak{I}(\HH^3\oplus\HH^{\sharp 2})$ we use the clips table \ref{tab:Clips_Table_H3H2*} (taken from~\cite{Oli2019}).
\begin{center}
\begin{table}[h!]
\resizebox{\textwidth}{!}{%
\renewcommand{\arraystretch}{1.2}
\begin{tabular}{|c|c|c|c|c|c|c|c|c|}
\hline
$\circledcirc$ & $[\triv]$ & $[\ZZ_2^-]$ & $[\DD_2^v]$ & $[\DD_3^v]$ & $[\DD_6^h]$ & $[\octa^-]$ & $[\OO(2)^-]$ & $[\OO(3)]$ \\
\hline \hline
$[\DD_2]$ & $[\triv]$ & $[\triv]$ & $[\triv]$ & $[\triv]$ & $[\triv]$ & $[\triv]$ & $[\triv]$ & $[\DD_2]$  \\ \hline
$[\DD_4^h]$ & $[\triv]$ & $[\triv],[\ZZ_2^-]$ & \begin{tabular}{c} $[\triv],[\ZZ_2],$ \\ $[\DD_2^v]$ \end{tabular} & $[\triv],[\ZZ_2^-]$ &
\begin{tabular}{c} $[\triv],[\ZZ_2],$ \\ $[\ZZ_2^-],[\DD_2^v]$ \end{tabular}  &
\begin{tabular}{c} $[\triv],[\ZZ_4^-],$ \\ $[\DD_4^h]$ \end{tabular} & \begin{tabular}{c} $[\triv],[\ZZ_2],$ \\ $[\DD_2^v]$ \end{tabular} 
& $[\DD_4^h]$ \\ \hline
$[\OO(2)]$ & $[\triv]$  & $[\triv]$ & $[\triv],[\ZZ_2]$ & $[\triv],[\ZZ_3]$ & 
\begin{tabular}{c} $[\triv],[\DD_2],$ \\ $[\DD_3]$ \end{tabular} 
&
\begin{tabular}{c} $[\triv],[\ZZ_2],$ \\ $[\DD_2],[\DD_3]$ \end{tabular} & 
\begin{tabular}{c} $[\triv],[\ZZ_2],$ \\ $[\SO(2)]$ \end{tabular} & $[\OO(3)]$ \\ \hline
$[\OO(3)]$ & $[\triv]$ & $[\ZZ_2^-]$ & $[\DD_2^v]$ & $[\DD_3^v]$ & $[\DD_6^h]$ & $[\octa^-]$ & $[\OO(2)^-]$ & $[\OO(3)]$\\ \hline
\end{tabular}
}
\caption{Clips table $\mathfrak{I}(\HH^3)\circledcirc\mathfrak{I}(\HH^{\sharp 2})$}
\label{tab:Clips_Table_H3H2*}
\end{table}
\end{center}
Thus we have
\begin{equation*}
\begin{split}
	\mathfrak{I}(\HH^3\oplus\HH^{\sharp 2})=\left\{[\triv],[\ZZ_2],[\ZZ_2^-],[\DD_2],[\DD_2^v],[\ZZ_3],[\DD_3],[\DD_3^v],[\ZZ_4^-],[\DD_4^h],
	[\DD_6^h],\right.\\ 
	\left. [\octa^-],[\SO(2)],[\OO(2)],[\OO(2)^-],[\OO(3)]\right\}.
\end{split}
\end{equation*}
As $[\OO(2)^-]\circledcirc[\OO(2)^-]=\set{[\ZZ_2^-],[\OO(2)^-]}$ we obtain
\begin{equation*}
	\mathfrak{I}(\HH^1\oplus\HH^1)=\set{[\ZZ_2^-],[\OO(2)^-],[\OO(3)]}.
\end{equation*}
Finally, using tables from~\cite{Oli2019} it can be checked by direct inspection that 
\ben
\mathfrak{I}(\Pie)=\mathfrak{I}(\HH^3\oplus\HH^{\sharp 2})\circledcirc \mathfrak{I}(\HH^1\oplus\HH^1)=\mathfrak{I}(\HH^3\oplus\HH^{\sharp 2})
\een
which leads to the conclusion. 
\end{proof}

\section{Symmetry classes characterisation using order two symmetries}\label{s:Sym_Classes_from_Second_Order_Sym}

Following the characterization of the elasticity symmetry classes using plane symmetries as introduced in~\cite{CVC2001}, the purpose of this section is to characterize as many piezoelectric symmetry classes as possible using only order two symmetries, which are elements $g\in\OO(3)$ satisfying $g^2=\Idd$ \red{and $g\neq \Idd$}, where $\Idd$ is the identity element of $\OO(3)$. Such an element is either:
\begin{itemize}
\item a \emph{plane symmetry} $\bs(\bnn):=\mathbf{I}-2\bnn\otimes \bnn$, with $\bnn$ being a unit vector in $\RR^{3}$ \red{and $\mathbf{I}$ the unit second order tensor};
\item an \emph{axial symmetry} $\vR(\bnn,\pi)=-\bs(\bnn)$.
\end{itemize}
The notation $\vR(\bnn,\theta)$, with $\theta\in [0;2\pi[$ denotes a rotation of angle $\theta$ and axis generated by $\bnn$, with the convention that $\vR(\ve_3,\theta)$ has the following matrix representation in the canonical basis $(\ve_1,\ve_2,\ve_3)$ of $\RR^3$
\begin{equation*}
\vR(\ve_3,\theta)=
\begin{pmatrix}
\cos(\theta) & - \sin(\theta) & 0 \\
\sin(\theta) & \cos(\theta) & 0 \\
0 & 0 & 1  	
\end{pmatrix}.
\end{equation*}

As illustrated by~\autoref{fig:Piez_Sym_Class_Second_Order} and~\autoref{fig:Piez_Sym_Class_Second_Order_II}, it is possible to characterize unambiguously $11$ over the $16$ piezoelectricity symmetry classes using only order two symmetries. The remaining classes, which are divided in two sets, can be distinguished by studying invariance with respect to higher-order rotations.

Before detailing the classification obtained, let us explain how  \red{two order symmetries}  are explicitly tested.  Using results from~\cite{ODKD2020a,ODKD2020b}, we provide in Theorems~\ref{thm:Piezo_Plane_sym}--\ref{thm:Piezo_Axis_sym} the explicit equations to determine axes and/or directions related to such order two symmetries.
\subsection{Explicit equations for order two symmetries}
For any given piezoelectricity tensor $\bP\in \Pie$, an order two symmetry is define to be some $g=\vR(\bnn,\pi)$ or $g=\bs(\bnn)$ such that 
\begin{equation}\label{eq:Second_order_sym}
	g\star \bP=\bP.
\end{equation}
Thus a direct way to obtain all order two symmetries of some piezoelectricity tensor $\bP\in \Pie$ is to try to solve
\begin{equation*}
	g\star\bP-\bP=0,\quad g=\bs(\bnn) \text{ or } g=\mathbf{r}(\bnn,\pi)=-\bs(\bnn),
\end{equation*}
with $\bnn=(n_1,n_2,n_3)$ a unit vector in $\RR^{3}$ as unknown, leading to 18 algebraic equations of homogeneous degree $6$ in $(n_1,n_2,n_3)$. In fact, following~\cite{ODKD2020a,ODKD2020b}, it is possible to obtain reduced algebraic equations of degree 3. 
To formulate such reduced equations we need to introduce the totally symmetric part $\bP^s$ of $\bP$, given by
\begin{equation*}
	(\bP^s)_{ijk}:=\frac{1}{3}\left( P_{ijk}+P_{jik}+P_{kij}\right).
\end{equation*}
The algebraic equations characterizing invariance of $\bP$  with respect to a plane symmetry are then obtained~\cite[Theoreom 6.1 (1)]{ODKD2020b}:
\begin{thm}\label{thm:Piezo_Plane_sym}
	 Let $\bP\in \Pie$ be a piezoelectricity tensor, $\bP^{s}$ its totally symmetric part, $\vv'\in \RR^3$ and $\ba\in \Sym^{2}$ respectively defined by~\eqref{eq:Vect_Schur_Weyl} and~\eqref{eq:Cov_Lienaire_Piezo}.  
	 For any unit vector $\bnn=(n_1,n_2,n_3)\in \RR^{3}$, $\bs(\bnn)$ is a plane symmetry of $\bP$ if and only if
	 \begin{equation}\label{eq:Planar_Piezo}
	 \begin{cases}
	 \bP^s \cdot \bnn-[\bnn\otimes \left(\bnn\cdot \bP^s\cdot \bnn\right)+\left(\bnn\cdot \bP^s\cdot \bnn\right)\otimes \bnn]=0, \\
	 \ba-[\bnn\otimes(\ba\cdot \bnn)+(\ba\cdot \bnn)\otimes \bnn]=0 \\
	 \vv'\cdot \bnn=0
	 \end{cases}
	 \end{equation}
	 where $\cdot$ is the standard contraction operation. 
\end{thm}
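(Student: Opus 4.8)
The plan is to reduce the single identity $\bs(\bnn)\star\bP=\bP$ to four independent scalar/tensor conditions by exploiting the equivariance of the Schur--Weyl decomposition $\bP\mapsto(\bH,\ba,\uu',\vv')\in\HH^3\oplus\HH^{\sharp 2}\oplus\HH^1\oplus\HH^1$ of \autoref{s:Harm_Decomposition}. Since that isomorphism is equivariant and the summands are pairwise non-isomorphic irreducibles, $\bP$ is fixed by $\bs(\bnn)$ if and only if each component is fixed under its own representation. Because $\det\bs(\bnn)=-1$, the twisted action on the $\HH^{\sharp 2}$-slot obeys $\bs(\bnn)\:\hat{\star}\:\ba=-\,\bs(\bnn)\star\ba$, so invariance of $\bP$ is equivalent to the simultaneous requirements $\bs(\bnn)\star\bH=\bH$, $\bs(\bnn)\star\uu'=\uu'$, $\bs(\bnn)\star\vv'=\vv'$ and $\bs(\bnn)\star\ba=-\ba$. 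The remaining work is to repackage these linear conditions into the three equations of~\eqref{eq:Planar_Piezo}.

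First I would merge the two standard-type summands $\bH$ and $\uu'$. Since symmetrization is $\OO(3)$-equivariant for the standard representation, and since $\uu'$ is the trace of $\bP^s$ while $\bH$ is its traceless (harmonic) part, the pair $(\bH,\uu')$ is fixed if and only if $\bP^s$ is fixed; this is exactly what lets the first equation of~\eqref{eq:Planar_Piezo} account for both summands at once. The vector slot is immediate: $\bs(\bnn)\star\vv'=\vv'$ reads $\vv'-2(\vv'\cdot\bnn)\bnn=\vv'$, i.e. $\vv'\cdot\bnn=0$, which is the third equation.

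The core of the argument, and the step I expect to be most delicate, is to convert $\bs(\bnn)\star\bP^s=\bP^s$ and $\bs(\bnn)\star\ba=-\ba$ into the stated polynomial identities. For this I would pass to the associated homogeneous forms and split each vector as $\xx=t\bnn+\yy$ with $\yy\cdot\bnn=0$; as $\bs(\bnn)$ acts by $t\mapsto-t$, $\yy\mapsto\yy$, invariance of the cubic $\bP^s$ is equivalent to the vanishing of its odd-in-$t$ part, namely $(\bnn\cdot\bP^s\cdot\bnn)\cdot\bnn=0$ together with $(\bnn\cdot\bP^s):(\yy\otimes\yy)=0$ for all $\yy\perp\bnn$, while $\bs(\bnn)\star\ba=-\ba$ is equivalent to the vanishing of the even-in-$t$ part of the quadratic form $\ba$. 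The obstacle is then the repackaging: one must check, by polarization, that the perpendicular-block condition forces $M:=\bP^s\cdot\bnn$ to have the form $\bnn\otimes u+u\otimes\bnn+c\,\bnn\otimes\bnn$ with $u\perp\bnn$, whereupon the remaining scalar condition yields $c=0$ and identifies $u=M\cdot\bnn=\bnn\cdot\bP^s\cdot\bnn$, which is precisely the first equation of~\eqref{eq:Planar_Piezo}; an identical computation handles $\ba$ and gives the second equation. I would also observe that the two scalar constraints $(\bnn\cdot\bP^s\cdot\bnn)\cdot\bnn=0$ and $\bnn\cdot\ba\cdot\bnn=0$ are automatically implied by these compact tensor equations, which is exactly why the reduced equations have degrees $3$ and $2$ in $\bnn$ rather than the naive degree $6$. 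This whole computation is the specialization to the harmonic structure of $\Pie$ of the reduced-equation formalism of~\cite{ODKD2020a,ODKD2020b}.
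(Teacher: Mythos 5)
Your argument is correct, but it is worth pointing out that the paper itself does not prove Theorem~\ref{thm:Piezo_Plane_sym}: it imports the statement verbatim from~\cite[Theorem 6.1(1)]{ODKD2020b}, so what you have written is a self-contained derivation rather than a reconstruction of an in-paper proof. Your route --- pass to the Schur--Weyl components $(\bH,\ba,\uu',\vv')$, use equivariance to reduce $\bs(\bnn)\star\bP=\bP$ to componentwise conditions, recombine $\bH$ and $\uu'$ into $\bP^s$, and then translate invariance of the associated cubic and quadratic forms under $t\mapsto -t$ into the vanishing of the odd (resp.\ even) parts --- is sound, and the polarization step correctly shows that the off-diagonal block structure $M=\bnn\otimes u+u\otimes\bnn$ with $u=M\cdot\bnn$ is exactly equivalent to the pair of conditions $\bnn\cdot M\cdot\bnn=0$ and $\yy\cdot M\cdot\yy=0$ for all $\yy\perp\bnn$; the converse implication (that the compact tensor equations force $u\cdot\bnn=0$, hence recover the scalar constraints) is the point most people omit, and you address it. Two small corrections. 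First, the summands of $\HH^{3}\oplus\HH^{\sharp 2}\oplus\HH^{1}\oplus\HH^{1}$ are \emph{not} pairwise non-isomorphic --- $\HH^{1}$ occurs twice --- but this does not matter: componentwise fixing follows simply because the action on a direct sum is diagonal by construction, so no appeal to Schur-type uniqueness is needed. Second, the sign bookkeeping on the pseudo-tensor slot ($\bs(\bnn)\:\hat{\star}\:\ba=-\bs(\bnn)\star\ba$ since $\det\bs(\bnn)=-1$) is the one place where the odd-order/twisted structure genuinely enters, and you handle it correctly; this is what turns the $\ba$-condition into vanishing of the \emph{even} part rather than the odd part. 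As it stands your sketch, once the polarization computation is written out, is a complete and more transparent proof than a bare citation, and it has the merit of using only the decomposition already set up in \autoref{s:Harm_Decomposition}.
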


\begin{rem}
	In any orthonormal basis of $\RR^{3}$ equations~\eqref{eq:Planar_Piezo} reads
	\begin{equation*}
	\begin{cases}
	(P^s)_{ijp}n_p-(P^s)_{piq}n_jn_pn_q-(P^s)_{pjq}n_in_pn_q=0 \\
	a_{ij}-a_{ip}n_jn_p-a_{jp}n_in_p=0 \\
	v'_in_i=0
	\end{cases}
	\end{equation*}
\end{rem}
The algebraic equations related to axis symmetry are then obtained using the \emph{generalized cross product} (see~\cite{DADKO2019}). In the specific case of a totally symmetric tensor $\mathbf{S}\in \Sym^3$ and a first-order tensor $\uu\in\RR^3$, this generalized cross product writes:
\begin{equation}
	(\mathbf{S}\times\uu)_{ijk}:=-\frac{1}{3}\left(S_{ijp}\epsilon_{pkq}u_{q}+S_{ikp}\epsilon_{pjq}u_{q}+S_{kjp}\epsilon_{piq}u_{q}\right)	
\end{equation}
while for two first-order tensors $\uu_1$ and $\uu_2$, $\uu_1\times \uu_2$ is the standard cross product.  We also need to introduce the totally symmetric tensor product between a first order tensor $\uu$ and a second order symmetric tensor $\ba$
\begin{equation}
(\uu\odot \mathbf{a})_{ijk}:=\frac{1}{3}\left(u_{i}a_{jk}+u_{k}a_{ij}+u_{j}a_{ki}\right).	
\end{equation}
Finally, from~\cite[Theoreom 6.1 (2)]{ODKD2020b}, the algebraic equations characterizing the invariance of $\bP$  with respect to axis symmetry are
\begin{thm}\label{thm:Piezo_Axis_sym}
	Let $\bP\in \Pie$ be a piezoelectricity tensor, $\bP^{s}$ its totally symmetric part, $\vv'\in \RR^3$ and $\ba\in \Sym^{2}$ respectively defined by~\eqref{eq:Vect_Schur_Weyl} and~\eqref{eq:Cov_Lienaire_Piezo}. For any unit vector $\bnn=(n_1,n_2,n_3)\in \RR^{3}$, $\vR(\bnn,\pi)$ is an axis symmetry of $\bP$ if and only if
	 \begin{equation}\label{eq:Axis_Piezo}
	 \begin{cases}
	 \left[\bP^s-3\bnn\odot(\bP^s\cdot \bnn)\right]\times \bnn =0 \\
	 \left(\ba\cdot\bnn\right)\times \bnn =0 \\
	 \vv'\times  \bnn=0
	 \end{cases}.
	 \end{equation}
\end{thm}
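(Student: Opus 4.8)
The plan is to exploit the equivariance of the Schur--Weyl decomposition $\bP=\bP^{s}+\bP^{r}$ to split the invariance of $\bP$ into independent conditions on its irreducible constituents, and then to treat each constituent by a short computation in an orthonormal basis adapted to $\bnn$. The starting observation is a sign identity: since $\vR(\bnn,\pi)=-\bs(\bnn)$ and the standard representation on $n$th-order tensors satisfies $((-g)\star\bT)=(-1)^n(g\star\bT)$, an axial symmetry acts on the odd-order tensor $\bP$ by $\vR(\bnn,\pi)\star\bP=-\bs(\bnn)\star\bP$. Now the symmetrization $\bP\mapsto\bP^{s}$ and the remainder $\bP\mapsto\bP^{r}$ are complementary $\OO(3)$-equivariant projections, and $\bP^{r}$ itself splits equivariantly into its parts $\ba\in\HH^{\sharp2}$ and $\vv'\in\HH^1$ (non-isomorphic irreducibles). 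Hence $\vR(\bnn,\pi)\star\bP=\bP$ holds if and only if $\bP^{s}$, $\ba$ and $\vv'$ are each fixed under the relevant (standard or twisted) action, which is exactly the three-line structure of~\eqref{eq:Axis_Piezo}.

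The two vector/bivector pieces are immediate. A $\pi$-rotation fixes the axis and reverses its orthogonal plane, so it fixes $\vv'\in\HH^1$ iff $\vv'$ is colinear with $\bnn$, i.e. $\vv'\times\bnn=0$; this is the third line. For $\ba\in\HH^{\sharp2}$ the twist is harmless here because $\det\vR(\bnn,\pi)=1$, so the twisted and standard actions agree, and since $\ba$ is of even order $(-g)\star\ba=g\star\ba$; invariance therefore reduces to $\bs(\bnn)\star\ba=\ba$, i.e. $\bnn$ is an eigenvector of the symmetric tensor $\ba$. In a basis with $\bnn=\ve_3$ this reads $a_{13}=a_{23}=0$, which is precisely $(\ba\cdot\bnn)\times\bnn=0$, the second line.

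The genuine work is the totally symmetric part $\bP^{s}$, on which the axial symmetry acts by the \emph{anti}-invariance $\bs(\bnn)\star\bP^{s}=-\bP^{s}$. In a basis with $\bnn=\ve_3$, invariance $\vR(\ve_3,\pi)\star\bP^{s}=\bP^{s}$ is equivalent to the vanishing of every component of $\bP^{s}$ with an even number of indices equal to $3$, namely $P^s_{\alpha\beta\gamma}$ and $P^s_{33\alpha}$ for Greek indices in $\{1,2\}$, leaving $P^s_{333}$ and $P^s_{3\alpha\beta}$ free; a Fourier-mode count over rotations about the axis (the modes $m\in\{0,\pm2\}$ survive) confirms that the invariant space is $4$-dimensional. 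To recover the first line of~\eqref{eq:Axis_Piezo}, I would set $\tilde{\bS}:=\bP^{s}-3\bnn\odot(\bP^{s}\cdot\bnn)$ and compute its components in this basis: the correction annihilates exactly the free family, $\tilde S_{3\alpha\beta}=0$, so that it cannot produce spurious equations, while $\tilde S_{\alpha\beta\gamma}=P^s_{\alpha\beta\gamma}$, $\tilde S_{33\alpha}=-P^s_{33\alpha}$ and $\tilde S_{333}=-2P^s_{333}$. Crossing with $\bnn=\ve_3$ then annihilates the axial component $P^s_{333}$, whereas $(\tilde{\bS}\times\bnn)_{33\gamma}=0$ gives $P^s_{33\alpha}=0$ and $(\tilde{\bS}\times\bnn)_{\alpha\beta\gamma}=0$ gives $P^s_{\alpha\beta\gamma}=0$. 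The one delicate verification is this last implication: it requires the in-plane generalized cross product to be injective on the in-plane part of $\bP^{s}$, which holds because $\Sym^3(\RR^2)$ has no rotation-invariant ($m=0$) mode. This is the main obstacle; past it, the equivalence is purely bookkeeping. Alternatively, the whole statement is the specialization to $\Pie$ of~\cite[Theorem 6.1(2)]{ODKD2020b} and may simply be invoked.
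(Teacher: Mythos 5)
Your proof is correct, and it is worth saying up front that the paper itself does not prove Theorem~\ref{thm:Piezo_Axis_sym}: it imports the statement directly from \cite[Theorem 6.1(2)]{ODKD2020b}, which is precisely the fallback you offer in your last sentence. Your argument is therefore a genuinely different route --- a self-contained verification where the paper has only a citation. The mechanism you use is sound at every step: the equivariance of the Schur--Weyl splitting legitimately reduces $\vR(\bnn,\pi)\star\bP=\bP$ to the three separate conditions on $\bP^{s}$, $\ba$ and $\vv'$ (and the twisted action on $\ba$ is indeed invisible here since $\det\vR(\bnn,\pi)=1$); the component bookkeeping for $\tilde{\bS}:=\bP^{s}-3\bnn\odot(\bP^{s}\cdot\bnn)$ in the adapted basis is right ($\tilde S_{3\alpha\beta}=0$, $\tilde S_{33\alpha}=-P^{s}_{33\alpha}$, $\tilde S_{333}=-2P^{s}_{333}$, $\tilde S_{\alpha\beta\gamma}=P^{s}_{\alpha\beta\gamma}$, and one checks $(\tilde{\bS}\times\ve_{3})_{3\alpha\beta}=0$ identically while $(\tilde{\bS}\times\ve_{3})_{33\gamma}$ is a nonzero multiple of $P^{s}_{33\gamma'}$); and the one point you flag as delicate --- injectivity of the map $\bS\mapsto\bS\times\ve_{3}$ on the in-plane part --- is correctly disposed of, since that map is a multiple of the infinitesimal generator of rotations about $\ve_{3}$ and $\Sym^{3}(\RR^{2})$ carries only the Fourier modes $\pm1,\pm3$, so the generator has trivial kernel there. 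What the citation buys the paper is brevity and generality (the reduced equations of \cite{ODKD2020b} cover elasticity as well as piezoelectricity); what your computation buys is transparency: it shows explicitly that the correction term $-3\bnn\odot(\bP^{s}\cdot\bnn)$ is calibrated exactly to annihilate the components $P^{s}_{3\alpha\beta}$ that must remain free, so that crossing with $\bnn$ imposes precisely the vanishing of the non-invariant components $P^{s}_{\alpha\beta\gamma}$ and $P^{s}_{33\alpha}$ and nothing more, in agreement with the $4$-dimensional fixed-point count.
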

\begin{rem}
For clarity sake,  the tensor $\left[\bP^s-3\bnn\odot(\bP^s\cdot \bnn)\right]$ can be expressed as follows in any orthonormal basis of $\RR^{3}$ 
\ben
\left[\bP^s-3\bnn\odot(\bP^s\cdot \bnn)\right]_{ijk}=P^s_{ijk}-\left(n_{i}P^s_{jkp}n_{p}+n_{j}P^s_{ikp}n_{p}+n_{k}P^s_{ijp}n_{p}\right).
\een
\end{rem}

\subsection{Order two symmetries and symmetry classes}

To begin, we  propose in \autoref{tab:Second_Order_Sym_groups} to make explicit all order two symmetries contained in a subgroup $H$ taken from one of the $16$ piezoelectricity symmetry classes $[H]$ (see Theorem~\ref{thm:Sym_Class_Piezo}).

\begin{center}
	\small
	\begin{table}[h!]
		\begin{tabular}{|l|l|}
			\hline
			\begin{tabular}{l}
				(1) $\triv,\ZZ_{3}$:\\
				$\bullet$ no any order two symmetries. 
			\end{tabular}
			&
			\begin{tabular}{l}
				(2) $\ZZ_{2}, \ZZ_{4}^{-},\SO(2)$:\\
				$\bullet$  1 axial symmetry $\vR(\ve_{3},\pi)$. 
			\end{tabular}
			\\ \hline
			\begin{tabular}{l}
				(3) $\ZZ_{2}^{-}$:\\
				$\bullet$ 1 plane symmetry $\bs(\ve_{1})$. 
			\end{tabular}
			&
			\begin{tabular}{l}
				(4) $\DD_{2}$:\\
				$\bullet$  3 axial symmetries $\vR(\ve_{i},\pi),i=1,2,3$. 
			\end{tabular}
			\\ \hline
			\begin{tabular}{l}
				(5) $\DD_{2}^{v}$:\\
				$\bullet$ 1 axial symmetry $\vR(\ve_{3},\pi)$;\\
				$\bullet$ 2 plane symmetries $\bs(\ve_{1}),\bs(\ve_{2})$.
			\end{tabular}
			&
			\begin{tabular}{l}
				(6) $\DD_{3}$:\\
				$\bullet$ 3 axial symmetries;\\
				$\vR(\ve_{1},\pi),\vR(\bnn_{k},\pi), \bnn_{k}=\vR\left(\ve_{3},\frac{2k\pi}{3}\right)\ve_{1}, k=1,2$
			\end{tabular}
			\\ \hline
			\begin{tabular}{l}
				(7) $\DD_{3}^{v}$:\\
				$\bullet$ 3 plane symmetries $\bs(\ve_{1}),\bs(\bnn_{k})$\\
				with $\bnn_{k}$ as in $\DD_{3}$ case. 
			\end{tabular}
			&
			\begin{tabular}{l}
				(8) $\DD_{4}^{h}$:\\
				$\bullet$ 3 axial symmetries $\vR(\ve_{i},\pi),i=1,2,3$;\\
				$\bullet$ 2 plane symmetries $\bs(\ve_{2}\pm\ve_{1})$
			\end{tabular}
			\\ \hline
			\begin{tabular}{l}
				(9) $\DD_{6}^{h}$:\\
				$\bullet$ 3 axial symmetries $\vR(\ve_{1},\pi),\vR(\bnn_{k},\pi)$, \\
				with $\bnn_{k}$ from $\DD_{3}$ case\\
				$\bullet$ 4 plane symmetries $\bs(\ve_{3}),\bs(\ve_{2})$ \\
				and $\bs(\mm_{k})$ with\\
				$\mm_{1}=\vR\left(\ve_{3},\frac{\pi}{6}\right)\ve_{1}, 
				\mm_{2}=\vR\left(\ve_{3},\frac{5\pi}{6}\right)\ve_{1}$\\
				\\
			\end{tabular}
			&
			\begin{tabular}{l}
				(10) $\octa^{-}$:\\
				$\bullet$ 3 axial symmetries $\vR(\ve_{i},\pi)$, $i=1,2,3$;\\
				$\bullet$ 6 plane symmetries $\bs(\ve_{j}\pm\ve_{i})$, $i<j$\\
				\hline
				(11) $\OO(2)$:\\
				$\infty$ axial symmetries (axis inside a common plane).\\
				\hline
				(12) $\OO(2)^{-}$:\\
				$\infty$ plane symmetries (all plane with a common axis).\\
			\end{tabular}\\
			\hline
		\multicolumn{2}{|l|}{(13) $\OO(3)$ (degenerate case) : $\infty$ plane and axial symmetries ($\bP=0$).} \\
		\hline
		\end{tabular}
		\caption{Order two symmetries for symmetry classes of the piezoelectricity tensor.}
		\label{tab:Second_Order_Sym_groups}
	\end{table}
\end{center}
From this table, we deduce, for a non null piezoelectricity tensor, two distinct situations:
\noindent \textbf{(a)} there exists at least one symmetry plane or at least two axial symmetries (cases $(3)$ to $(12)$ of \autoref{tab:Second_Order_Sym_groups}). In these 10 cases, the symmetry class is directly characterized by order two symmetries and can be determined using the algorithm depicted in \autoref{fig:Piez_Sym_Class_Second_Order}.
\begin{center}
	\begin{figure}[H]
		\includegraphics[scale=1]{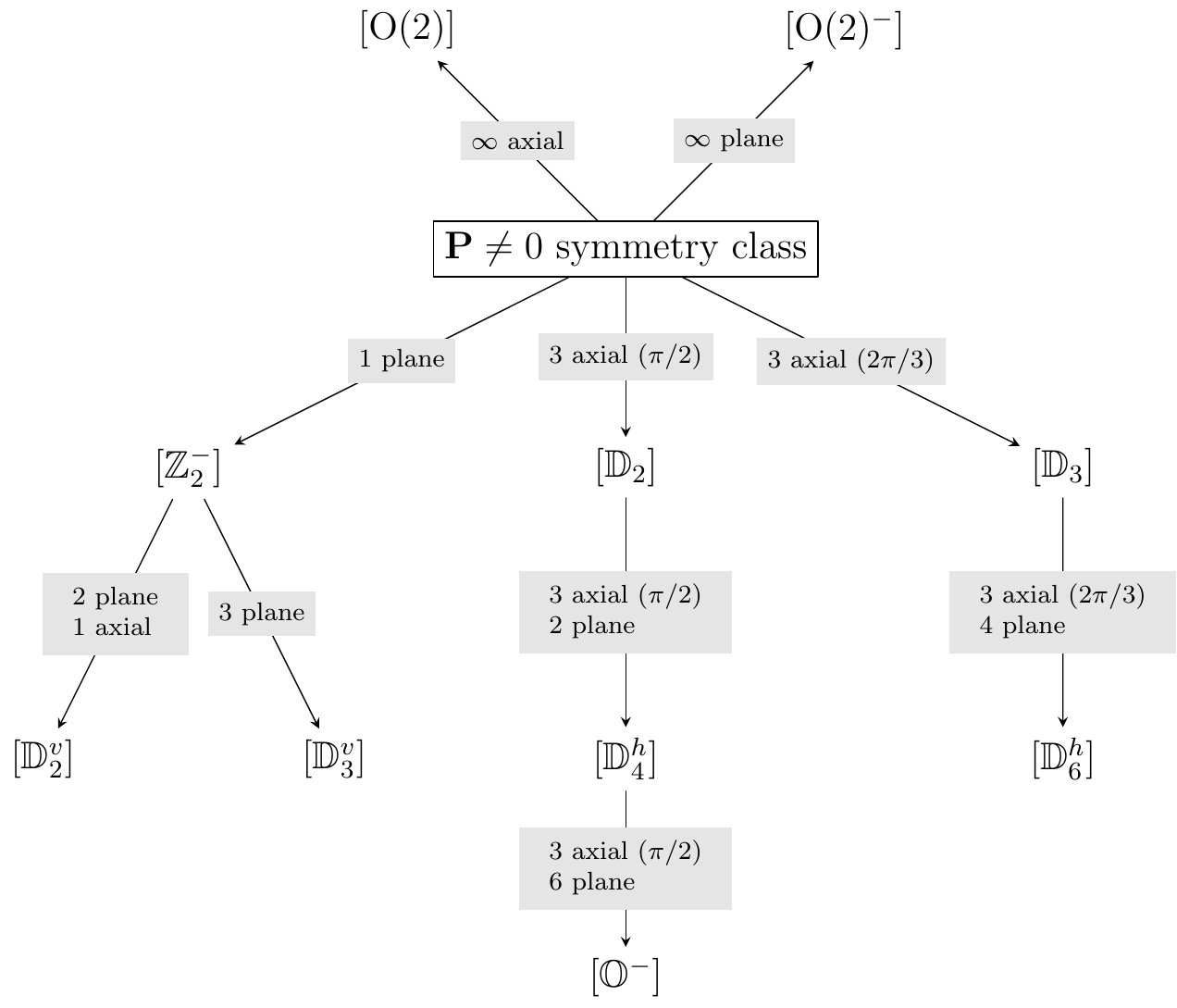}
		\caption{Symmetry classes of a non null piezoelectricity tensor characterized by order two symmetries.}
		\label{fig:Piez_Sym_Class_Second_Order}
	\end{figure}
\end{center}
\noindent \textbf{(b)} there is no plane symmetry in which case, either there is no any order two symmetry, or there exists a unique axial symmetry (cases $(1)$ and $(2)$ of \autoref{tab:Second_Order_Sym_groups}). In these 2 cases, further operations are needed to identify the symmetry class of the tensor.
If we are in the situation $(1)$ of \autoref{tab:Second_Order_Sym_groups}, the symmetry class is identified by the following extra test:
\begin{lem}\label{lem:No_Axe_No_Plan}
	Let suppose that $\bP\in \Pie$ has no any order two symmetry and let us consider the equation
	\begin{equation}\label{eq:Order3_Rotation}
	\vR\left(\bnn,\frac{2\pi}{3}\right)\star \bP=\bP
	\end{equation}
	with unknown unit vector $\bnn=(n_1,n_2,n_3)$ in $\RR^{3}$. If such equation~\eqref{eq:Order3_Rotation} has a solution, then the symmetry class of $\bP$ is $[\ZZ_{3}]$, otherwise it is $[\triv]$.
\end{lem}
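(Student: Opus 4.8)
The plan is to leverage the full classification from Theorem~\ref{thm:Sym_Class_Piezo} together with the inventory of order two elements recorded in \autoref{tab:Second_Order_Sym_groups}. The starting observation is that, among the sixteen symmetry classes, the only two whose representative subgroups contain \emph{no} order two element are $[\triv]$ and $[\ZZ_3]$ (case $(1)$ of the table). Hence the hypothesis that $\bP$ admits no order two symmetry already forces $[G_{\bP}]\in\{[\triv],[\ZZ_3]\}$, and the whole content of the lemma is to decide between these two by means of the order three test~\eqref{eq:Order3_Rotation}.

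For the first implication, I would suppose that \eqref{eq:Order3_Rotation} admits a unit-vector solution $\bnn$. Then $\vR(\bnn,2\pi/3)\in G_{\bP}$ is a genuine order three rotation, so $G_{\bP}\neq\{\Idd\}$ and the class cannot be $[\triv]$; by the dichotomy above it must then be $[\ZZ_3]$. The subtle point here — and where I expect the real work to sit — is to be sure that an order three symmetry, \emph{combined with the absence of any order two symmetry}, pins down exactly $[\ZZ_3]$ and not some larger class. One has to confirm, by scanning \autoref{tab:Second_Order_Sym_groups}, that every other class carrying a threefold axis (namely $[\DD_3]$, $[\DD_3^v]$, $[\DD_6^h]$, $[\octa^-]$, $[\SO(2)]$, $[\OO(2)]$, $[\OO(2)^-]$ and $[\OO(3)]$) also contains at least one order two element, and is therefore excluded under the hypothesis. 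This is the heart of the argument; everything else is bookkeeping.

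For the converse I would argue through the two candidate classes directly. If $[G_{\bP}]=[\ZZ_3]$, I write $G_{\bP}=g\,\ZZ_3\,g^{-1}$ for some $g\in\OO(3)$; then $G_{\bP}$ contains $g\,\vR(\ve_3,2\pi/3)\,g^{-1}=\vR(g\ve_3,2\pi/3)$, so $\bnn=g\ve_3$ solves~\eqref{eq:Order3_Rotation} and the equation is solvable. If instead $[G_{\bP}]=[\triv]$, then $G_{\bP}=\{\Idd\}$ and no nontrivial rotation — in particular no $\vR(\bnn,2\pi/3)$ — can fix $\bP$, so~\eqref{eq:Order3_Rotation} has no solution. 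Combining the two implications yields the stated equivalence: solvability of~\eqref{eq:Order3_Rotation} characterises the class $[\ZZ_3]$, and its failure characterises $[\triv]$.

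The main obstacle is thus not analytic but structural: it reduces to the verification that $[\ZZ_3]$ is the unique class in the lattice possessing a threefold axis yet no binary symmetry. Once this is secured from \autoref{tab:Second_Order_Sym_groups}, the lemma follows with no further computation; note in particular that \eqref{eq:Order3_Rotation} need not be solved explicitly — only its solvability is being tested.
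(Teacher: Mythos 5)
Your proof is correct and follows exactly the argument the paper intends: the paper leaves Lemma~\ref{lem:No_Axe_No_Plan} without an explicit proof, but its companion Lemma~\ref{lem:One_axe} is proved in precisely this style (restrict the candidate classes via \autoref{tab:Second_Order_Sym_groups}, then discriminate with one extra rotation test). Your dichotomy $[G_{\bP}]\in\{[\triv],[\ZZ_3]\}$ and the two converse implications are exactly the intended content, so there is nothing to add.
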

While, for the situation $(2)$ of \autoref{tab:Second_Order_Sym_groups}, the symmetry class is identified by the following extra test:
\begin{lem}\label{lem:One_axe}
	Let suppose that $\bP\in \Pie$ has only one axial symmetry $\vR(\bnn,\pi)$ (given by Theorem~\ref{thm:Piezo_Axis_sym}):
	\begin{itemize}
		\item if $\displaystyle{\vR\left(\bnn,\frac{\pi}{2}\right)\star \bP=\bP}$
		then $\bP$ symmetry class is $[\SO(2)]$;
		\item if $\displaystyle{\vR\left(\bnn,\frac{\pi}{2}\right)\star \bP=-\bP}$ then $\bP$ symmetry class is $[\ZZ_{4}^-]$;
		\item otherwise  $\bP$ symmetry class is $[\ZZ_2]$.
	\end{itemize}
\end{lem}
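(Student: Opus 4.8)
The plan is to first cut down the possibilities to a short list using Table~\ref{tab:Second_Order_Sym_groups}, and then to separate the survivors by a single test on the quarter-turn about the distinguished axis. By hypothesis $\bP$ lies in situation $(2)$ of that table, namely it has exactly one axial symmetry and no plane symmetry; the table shows that the only classes with this property are $[\ZZ_2]$, $[\ZZ_4^-]$ and $[\SO(2)]$, so it remains only to decide which of the three is $[G_{\bP}]$. In each of these three groups the unique axial symmetry $\vR(\bnn,\pi)$ is the $\pi$-rotation about the group's principal axis, and $\bnn$ is exactly the axis returned by Theorem~\ref{thm:Piezo_Axis_sym}; after aligning coordinates with $\bnn$ I may therefore assume that $G_{\bP}$ is literally one of $\ZZ_2=\{\Idd,\vR(\bnn,\pi)\}$, $\ZZ_4^-=\langle -\vR(\bnn,\pi/2)\rangle$, or $\SO(2)$ about $\bnn$.

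Setting $R:=\vR(\bnn,\pi/2)$, the computational ingredient I would record first is the sign identity
\begin{equation*}
(-R)\star \bP = (-1)^3\,(R\star \bP) = -\,(R\star \bP),
\end{equation*}
valid because $\Pie$ is a third-order tensor space, so the standard representation~\eqref{eq:standard_rep} contributes three factors of the matrix $-R$. Using the tautology $g\star\bP=\bP\Leftrightarrow g\in G_{\bP}$, I would then run through the three outcomes. If $R\star\bP=\bP$ then $R\in G_{\bP}$, and among the three candidates only $\SO(2)$ contains the proper quarter-turn $R$ (the proper part of $\ZZ_4^-$ is merely $\{\Idd,\vR(\bnn,\pi)\}$ and $\ZZ_2$ is smaller still), forcing $[G_{\bP}]=[\SO(2)]$. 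If instead $R\star\bP=-\bP$, the sign identity turns this into $(-R)\star\bP=\bP$, i.e. $-R\in G_{\bP}$; as $-R$ is the improper order-four generator of $\ZZ_4^-$ and is contained in neither $\SO(2)$ (all of whose elements are proper) nor $\ZZ_2$, this forces $[G_{\bP}]=[\ZZ_4^-]$. If neither equality holds, then $R\notin G_{\bP}$ and $-R\notin G_{\bP}$, leaving $[G_{\bP}]=[\ZZ_2]$. The three outcomes are exclusive since $R\star\bP=\bP$ together with $R\star\bP=-\bP$ would give $\bP=-\bP=0$, contradicting the existence of a genuine axial symmetry of $\bP$.

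The enumeration of candidate classes and the membership checks for $R$ and $-R$ are routine bookkeeping; the one delicate point---really the whole content of the lemma---is the sign $(-1)^3$. Because the tensor has odd order, invariance under the rotoinversion $-R$ manifests itself as the \emph{anti}-invariance $R\star\bP=-\bP$, and it is exactly this minus sign that discriminates $[\ZZ_4^-]$ from $[\SO(2)]$. The same phenomenon explains why the proper group $\ZZ_4$ is not on the candidate list at all: mere invariance under $R$ already upgrades the symmetry to $[\SO(2)]$, in agreement with Theorem~\ref{thm:Sym_Class_Piezo}.
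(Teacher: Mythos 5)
Your proposal is correct and follows essentially the same route as the paper: restrict to the three candidate classes via Table~\ref{tab:Second_Order_Sym_groups}, align the axis with $\bnn$, and use the odd-order sign identity $(-g)\star\bP=-(g\star\bP)$ to test membership of $\vR(\bnn,\pi/2)$ and $-\vR(\bnn,\pi/2)$ in $G_{\bP}$. Your write-up simply makes explicit the membership bookkeeping and the mutual exclusivity of the three outcomes, which the paper leaves implicit.
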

\begin{proof}
	From \autoref{tab:Second_Order_Sym_groups}, we know that the only possible symmetry classes of $\bP$ are $[\ZZ_{2}],[\ZZ_{4}^-]$ or $[\SO(2)]$. Now, we can always suppose that $\bnn=\ve_{3}$, so that we can consider the rotation \begin{equation*}
		g=\vR\left(\ve_{3},\frac{\pi}{2}\right)\in \SO(2)
	\end{equation*} 
	and observe that
	$-g\in \ZZ_{4}^-$ (cf. \autoref{sec:O3_closed_subgroups}). Using $g\star \bP$ or $(-g)\star \bP=-(g\star \bP)$, we thus can decide whether the symmetry group of $\bP$ is $\ZZ_2$, $\ZZ_{4}^-$ or  $\SO(2)$. 
\end{proof}

Results of Lemmas~\ref{lem:No_Axe_No_Plan} and~\ref{lem:One_axe} are summarized in \autoref{fig:Piez_Sym_Class_Second_Order_II}. 

\begin{center}
	\begin{figure}[h]
		\includegraphics[scale=1]{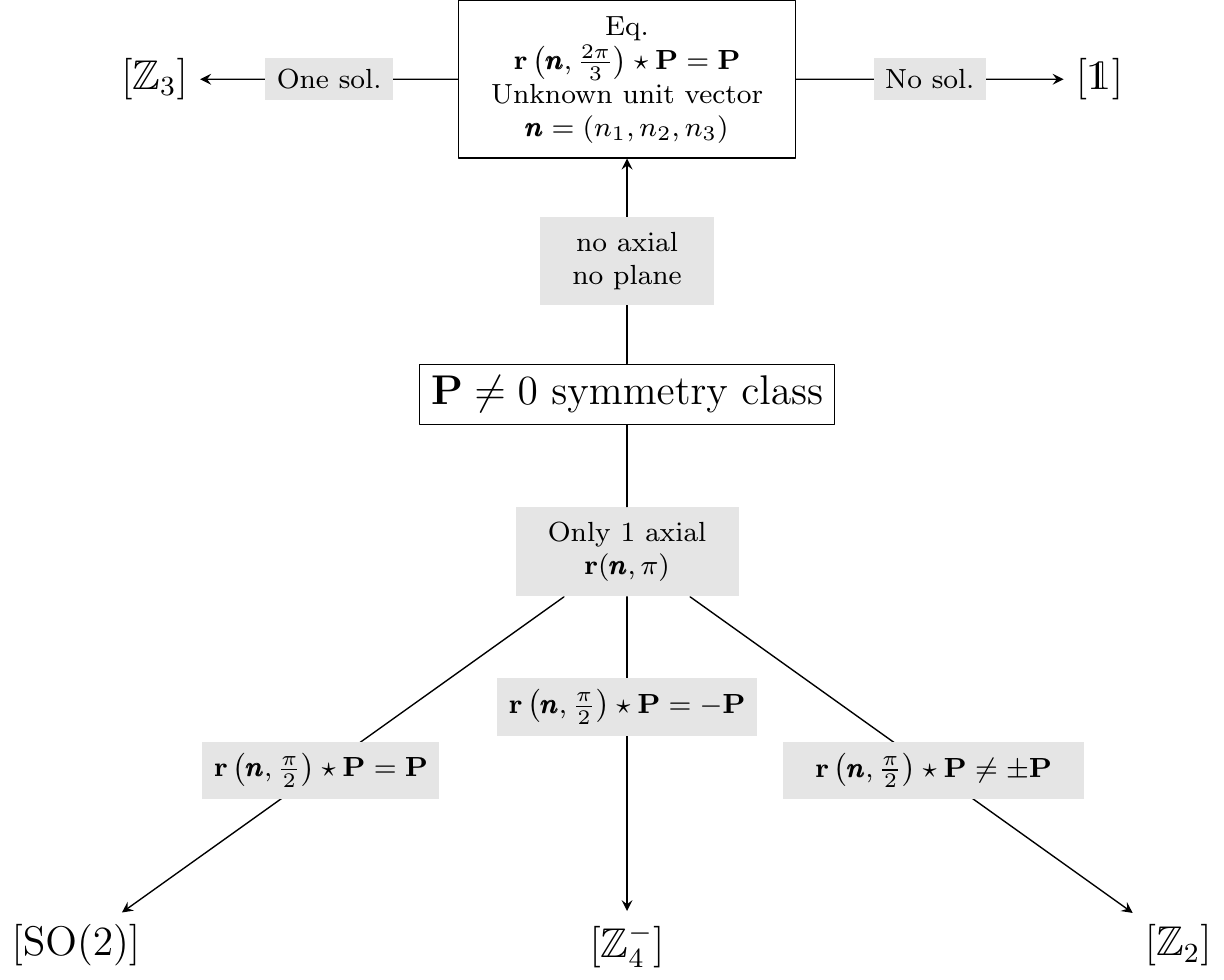}
		\caption{Piezoelectricity symmetry classes from order two symmetries, first cases.}
		\label{fig:Piez_Sym_Class_Second_Order_II}
	\end{figure}
\end{center}

\begin{appendix}

\section{$\OO(3)$-closed subgroups}\label{sec:O3_closed_subgroups}

Classification of $\OO(3)$-closed subgroups is quite classical~\cite{IG1984,Ste1994}. Following Golubitsky and al.~\cite{GSS1988}, $\OO(3)$--closed subgroups can be described using \emph{three types of subgroups}. These subgroups are defined as follows, where $\Idd$ is the neutral element in $\OO(3)$:
\begin{description}
	\item[Type I (Chiral)] A subgroup $\Gamma$ is of type I if it is a subgroup of $\SO(3)$. Type I subgroups are also said to be \emph{chiral} subgroups;
	\item[Type II (Centrosymmetric)] A subgroup $\Gamma$ is of type II if $-\Idd\in \Gamma$. In that case, $\Gamma=K\times \ZZ_{2}^{c}$ where $K$ is some $\SO(3)$ closed subgroup and $\ZZ_{2}^{c}:=\set{\Idd,-\Idd}$. Type II subgroups are also said to be \emph{centrosymmetric};
	\item[Type III] A subgroup $\Gamma$ is of type III if $-\Idd \notin \Gamma$ and $\Gamma$ is not a subgroup of $\SO(3)$.
\end{description}
From a physical point of view, it can be relevant to define subgroups with a privileged axis:
\begin{defn}[Polar subgroups]
	A subgroup $\Gamma$ is said to be \emph{polar} if there exists $\vv\in \RR^{3}$ such that $h\star \vv=\vv$ for all $h\in H$.
\end{defn}

\begin{rem}
	While the Type I-II-III classification leads to \emph{disjoint types} of subgroups, the classification using using Centro--symmetric ($I$), Chiral ($C$) and Polar ($P$) characteristic leads to the following disjoint classification:
	\begin{figure}[H]
		\centering
		\includegraphics[scale=1.2]{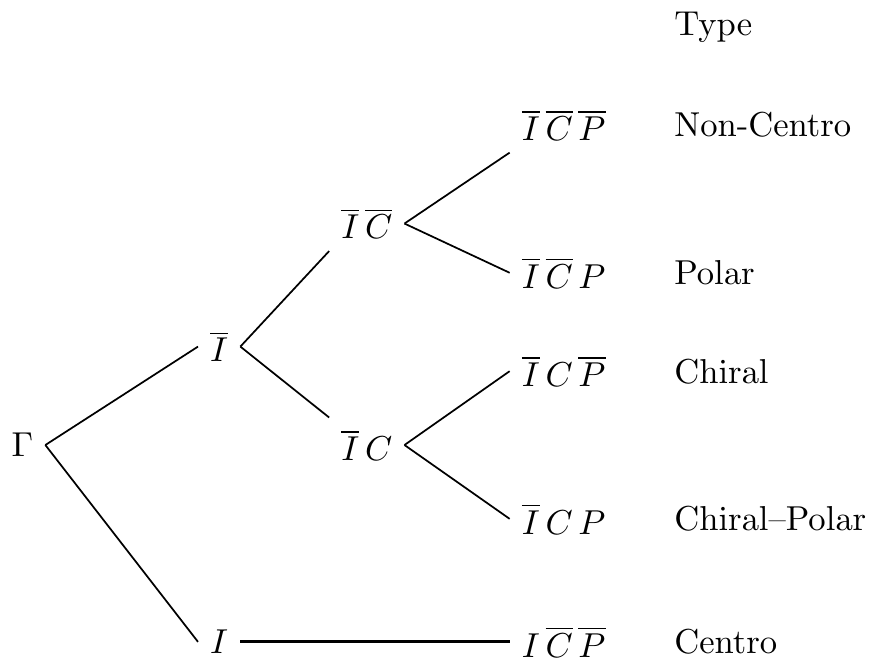}
		\caption{Physical classification of $\OO(3)$-subgroups. $I$, $C$ and $P$
			indicate, respectively, the Centro-symmetric, Chiral and Polar properties of a subgroup $\Gamma$. An overline indicates that the associated property is lacking.}
		\label{fig:StrGrp}
	\end{figure} 
\end{rem}
The following different tables (\autoref{tab:TypeI_Dictionnary},-\ref{tab:TypeII_Dictionnary},-\ref{tab:TypeIII_Dictionnary}) detail the physical  characteristics of the \red{32 crystallographic} point groups of type I, II and III. In these tables, the correspondence between the Golubtisky's notations and the crystallographers' ones such as the Hermann--Maugin or the Schonflies systems are provided. Note also that the column \emph{System} refers to the Bravais lattice, and \emph{Space Groups} indicates the reference of the space groups having this group as its point group \red{(see~\cite{Ste1994} for instance)}.

\begin{table}[H]
	\begin{center}
		\begin{tabular}{|c|>{$}c<{$}|>{$}c<{$}|>{$}c<{$}|c|c|}
			\hline
			\multicolumn{6}{|c|}{Type I closed subgroups}\\
			\hline\hline
			System&\text{Hermann-Maugin} & \displaystyle \text{Schonflies} & \displaystyle \text{Golubitsky}& Nature&Space Groups\\ \hline			
			Triclinic 	&\displaystyle 1 & \displaystyle \ZZ_{1} &\displaystyle \triv  &$CP$&1\\ \hline
			
			Monoclinic 	&	\displaystyle 2 & \displaystyle \ZZ_{2}& \displaystyle \ZZ_{2}  &$CP$&3-5\\ \hline
			Orthotropic	&	\displaystyle 222 & \displaystyle \DD_{2}& \displaystyle \DD_{2}&$C$  &16-24\\ \hline
			Trigonal&\displaystyle 3 & \displaystyle \ZZ_{3}& \displaystyle \ZZ_{3} &$CP$&143-146 \\ \hline
			
			Trigonal&	\displaystyle 32 & \displaystyle \DD_{3}& \displaystyle \DD_{3} &$C$&149-155\\ \hline
			Tetragonal&	\displaystyle 4 & \displaystyle \ZZ_{4}& \displaystyle \ZZ_{4}&$CP$&75-80 \\ \hline
			Tetragonal&	\displaystyle 422 & \displaystyle \DD_{4}& \displaystyle \DD_{4} &$C$&89-98\\ \hline
			Hexagonal&	\displaystyle 6 & \displaystyle \ZZ_{6}& \displaystyle \ZZ_{6} &$CP$&168-173 \\ \hline
			
			Hexagonal&	\displaystyle 622 & \displaystyle \DD_{6}& \displaystyle \DD_{6}&$C$ &177-182 \\ \hline
			&\displaystyle \infty & \displaystyle \ZZ_{\infty}& \displaystyle \SO(2) &$CP$ &\\ \hline
			&\displaystyle \infty 2 & \displaystyle \DD_{\infty}& \displaystyle \OO(2) &$C$ &\\ \hline
			Cubic&\displaystyle 23 & \displaystyle \mathrm{T} & \displaystyle\tetra &$C$&195-199\\ \hline
			Cubic	&\displaystyle 432 & \displaystyle \mathrm{O} & \displaystyle\octa& $C$ &207-214 \\ \hline
			&\displaystyle 532 & \displaystyle \mathrm{I} & \displaystyle\ico &$C$&\\ \hline
			&\displaystyle \infty \infty  & & \displaystyle \SO(3)  &$C$&\\ \hline
		\end{tabular}
	\end{center}		
	\caption{Type I closed subgroups: designation and characteristics ($C$ = Chiral, $P$=Polar).}
	\label{tab:TypeI_Dictionnary}		
\end{table}

\begin{table}[H]
	\begin{center}
		\begin{tabular}{|c|>{$}c<{$}|>{$}c<{$}|>{$}c<{$}|c|c|}
			\hline
			\multicolumn{6}{|c|}{Type II closed subgroups}\\
			\hline\hline
			System&\text{Hermann-Maugin} & \displaystyle \text{Schonflies} & \displaystyle \text{Golubitsky}& Nature&Space Group\\ \hline
			Triclinic&\displaystyle \bar{1} & \displaystyle \ZZ_{i}& \displaystyle \ZZ_{2}^{c}&$I$&2\\ \hline
			Monoclinic&\displaystyle 2/m & \displaystyle \ZZ_{2h}& \displaystyle \ZZ_{2}\times \red{\ZZ_{2}^c} &$I$&10-15\\ \hline
			Orthotropic&\displaystyle mmm & \displaystyle \DD_{2h}& \displaystyle \DD_{2}\times \red{\ZZ_{2}^c} &$I$&47-74 \\ \hline
			Trigonal&\displaystyle \bar{3} & \displaystyle \mathrm{S}_{6},\ \ZZ_{3i}& \displaystyle  \ZZ_{3}\times \red{\ZZ_{2}^c} &$I$&147-148\\ \hline
			Trigonal&\displaystyle \bar{3}m & \displaystyle \DD_{3d}& \displaystyle \DD_{3}\times \red{\ZZ_{2}^c} &$I$&162-167 \\ \hline
			Tetragonal&\displaystyle 4/m & \displaystyle \ZZ_{4h}& \displaystyle \ZZ_{4}\times \ZZ_{2}^{c}&$I$&83-88 \\ \hline
			Tetragonal&\displaystyle 4/mmm & \displaystyle \DD_{4h}& \displaystyle \DD_{4}\times \ZZ_{2}^{c}&$I$&123-142 \\ \hline
			Hexagonal	&\displaystyle 6/m & \displaystyle \ZZ_{6h}& \displaystyle \ZZ_{6}\times \ZZ_{2}^{c}&$I$&175-176 \\ \hline
			Hexagonal	&\displaystyle 6/mmm & \displaystyle \DD_{6h}& \displaystyle \DD_{6}\times \ZZ_{2}^{c}&$I$&191-194 \\ \hline
			&\displaystyle \infty /m& \displaystyle \ZZ_{\infty h}& \displaystyle \SO(2)\times \ZZ_{2}^{c} &$I$&\\ \hline
			&\displaystyle \infty /mm & \displaystyle \DD_{\infty h}& \displaystyle \OO(2)\times \ZZ_{2}^{c} &$I$&\\ \hline		
			Cubic	&\displaystyle m\bar{3} & \displaystyle \mathrm{T}_{h}& \displaystyle\tetra\times \ZZ_{2}^{c} &$I$&200-206\\ \hline
			Cubic	&\displaystyle m\bar{3}m & \displaystyle \mathrm{O}_{h}& \displaystyle\octa\times \ZZ_{2}^{c}&$I$&221-230 \\ \hline
			&\displaystyle \bar{5}\bar{3}m & \displaystyle \mathrm{I}_{h}& \displaystyle\ico\times \ZZ_{2}^{c}&$I$& \\ \hline
			&\displaystyle \infty /m \infty /m & & \displaystyle \OO(3)& $I$& \\ \hline
		\end{tabular}
	\end{center}
		\caption{Type II closed subgroups: designation and characteristics ($I$ = Centrosymetric).}
	\label{tab:TypeII_Dictionnary}	
\end{table}

\begin{table}[H]
	\begin{center}
		\begin{tabular}{|c|>{$}c<{$}|>{$}c<{$}|>{$}c<{$}|c|c|}
			\hline
			\multicolumn{6}{|c|}{Type III closed subgroups}\\
			\hline\hline
			System&\text{Hermann-Maugin} & \displaystyle \text{Schonflies} & \displaystyle \text{Golubitsky}& Nature&Space Groups\\ \hline
			Monocinic&\displaystyle m & \displaystyle  \ZZ_{s}& \displaystyle \ZZ_{2}^{-} &$P$&6-9 \\ \hline
			Orthotropic&\displaystyle 2mm & \displaystyle \ZZ_{2v}& \displaystyle \DD_{2}^{v}&$P$&25-46  \\ \hline
			Trigonal&\displaystyle 3m & \displaystyle \ZZ_{3v}& \displaystyle \DD_{3}^{v} &$P$&156-161\\ \hline
			Tetragonal&\displaystyle \bar{4} & \displaystyle \mathrm{S}_{4}& \displaystyle \ZZ_{4}^{-} &&81-82 \\ \hline
			Tetragonal&\displaystyle 4mm & \displaystyle \ZZ_{4v}& \displaystyle \DD_{4}^{v} &$P$&99-110 \\ \hline
			Tetragonal&\displaystyle \bar{4}2m & \displaystyle \DD_{2d}& \displaystyle \DD_{4}^{h}& &111-122\\ \hline
			Hexagonal&\displaystyle \bar{6} & \displaystyle \ZZ_{3h}& \displaystyle \ZZ_{6}^{-} &&174 \\ \hline
			Hexagonal&\displaystyle 6mm & \displaystyle \ZZ_{6v}& \displaystyle \DD_{6}^{v} &$P$&183-186 \\ \hline
			Hexagonal&\displaystyle \bar{6}2m & \displaystyle \DD_{3h}& \displaystyle \DD_{6}^{h}&&187-190 \\ \hline
			Cubic&\displaystyle \bar{4}3m & \displaystyle \mathrm{T}_{d}& \displaystyle \mathrm{O}^{-}&&215-220 \\ \hline
			&\displaystyle \infty m& \displaystyle \ZZ_{\infty v}& \displaystyle \OO(2)^{-} &$P$& \\ \hline
		\end{tabular}
	\end{center}
	\caption{Type III closed subgroups: designation and characteristics ($P$=Polar).}
	\label{tab:TypeIII_Dictionnary}		
\end{table}	

In the following table a set of generators is detailed for each finite $\OO(3)$-closed subgroups. Those generators are used in \autoref{sec:Norm_Form} to obtain matrix normal forms for elements of $\Pie$.

\begin{table}[H]
\renewcommand{\arraystretch}{2}
		\begin{center}
		\begin{tabular}{|>{$}l<{$}|>{$}c<{$}|>{$}c<{$}|}
			\hline
			\displaystyle \text{Group}& \text{Order} & \displaystyle \text{Generators}\\ \hline
			\displaystyle \ZZ^-_{2} & 2  &\displaystyle \bs(\ve_{3})\\ \hline
			\ZZ_{n},\ n\geq2 & n & \vR\left(\ve_{3};\sfrac{2\pi}{n}\right) \\ \hline
			\displaystyle \DD_{n},\ n\geq2 & 2n & \vR\left(\ve_{3};\sfrac{2\pi}{n}\right),\ \vR(\ve_{1};\pi)\\ \hline
			\displaystyle \ZZ^-_{2n},\ n\geq2   & 2n & \displaystyle -\vR\left(\ve_{3};\sfrac{\pi}{n}\right) \\ \hline
			\displaystyle \DD_{2n}^h,\ n\geq2 & 4n & \displaystyle -\vR\left(\ve_{3};\sfrac{\pi}{n}\right),\ \vR(\ve_{1},\pi)\\ \hline
			\displaystyle \DD_{n}^v,\ n\geq2  & 2n & \displaystyle  \vR\left(\ve_{3};\sfrac{2\pi}{n}\right),\ \bs(\ve_{1}) \\ \hline
			\displaystyle \tetra  & 12 & \displaystyle \vR(\ve_{3};\pi),\ \vR(\ve_{1};\pi),\ \vR(\ve_{1}+\ve_{2}+\ve_{3};\sfrac{2\pi}{3})\\ \hline
			\displaystyle \octa   & 24 & \displaystyle \vR(\ve_{3};\sfrac{\pi}{2}),\ \vR(\ve_{1};\pi),\ \vR(\ve_{1}+\ve_{2}+\ve_{3};\sfrac{2\pi}{3})\\ \hline
			\displaystyle \octa^-  & 24 & \displaystyle -\vR(\ve_{3};\sfrac{\pi}{2}),\ \bs(\ve_{2}-\ve_{3}) \\ \hline 
			\displaystyle \ico & 60 & \vR(\ve_{3};\pi),\ \vR(\ve_{1}+\ve_{2}+\ve_{3};\sfrac{2\pi}{3}), \vR(\ve_{1}+\phi\ve_{3};\sfrac{2\pi}{5})\quad \phi:=(1+\sqrt{5})/2 \\ \hline
		\end{tabular}
	\end{center}
\caption{Generators of \red{finite} closed $\OO(3)$-subgroups}
\label{tab:TabGen}	
\end{table}
To have a geometrical picture of these groups some illustrations are provided.  In~\autoref{fig:IZD}, \autoref{fig:IIZD} and \autoref{fig:III}, figures that are invariant with respect to groups of type I, II and II are provided. On these figures:
\begin{itemize}
	\item the rotational order of the invariance is indicated on the rotation axis (depicted with an arrow);
	\item symmetry planes are indicated in solid lines and without arrow;
	\item arrows drawn on the figures indicate the \emph{spin} of the object. The presence of spin is due to chirality.
\end{itemize}


\begin{figure}[H]
	\centering
	\subfloat[]{\includegraphics[scale=0.4]{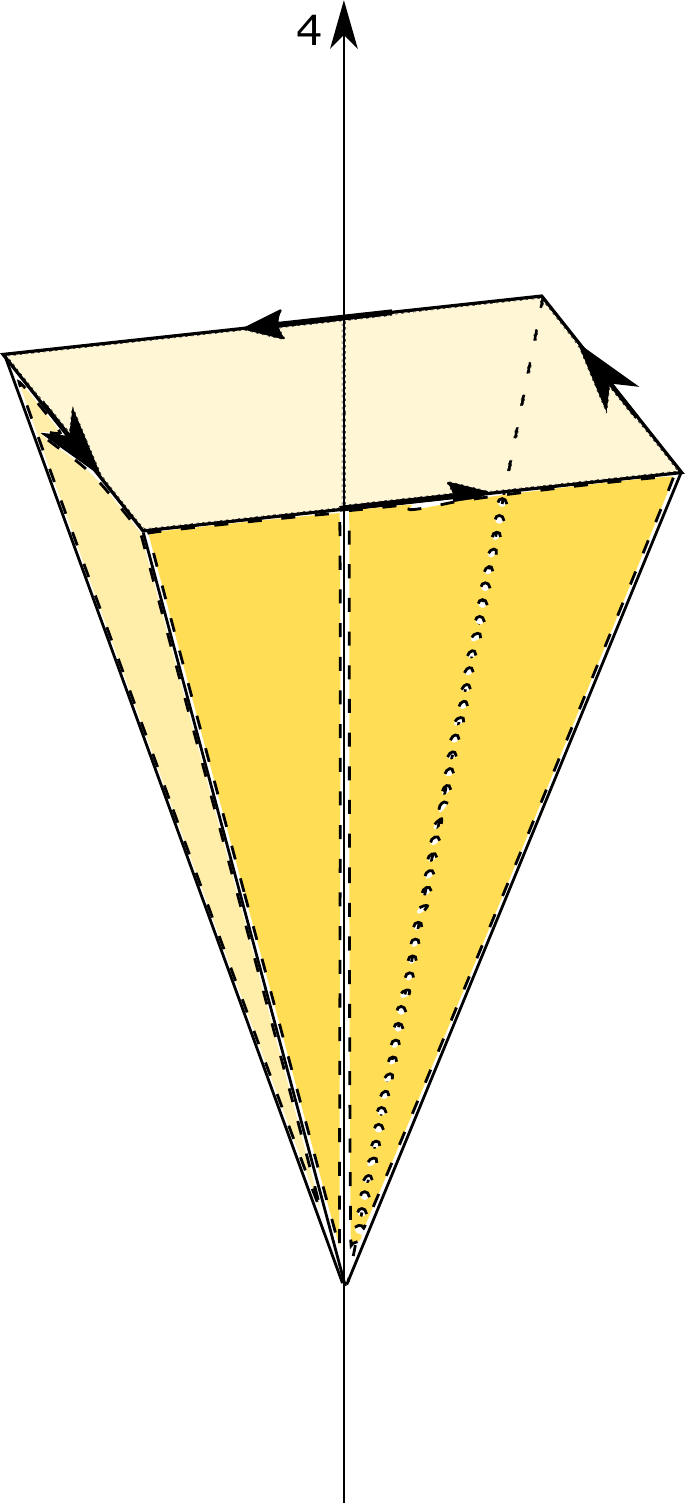}} \hspace{3cm} %
	\subfloat[]{\includegraphics[scale=0.4]{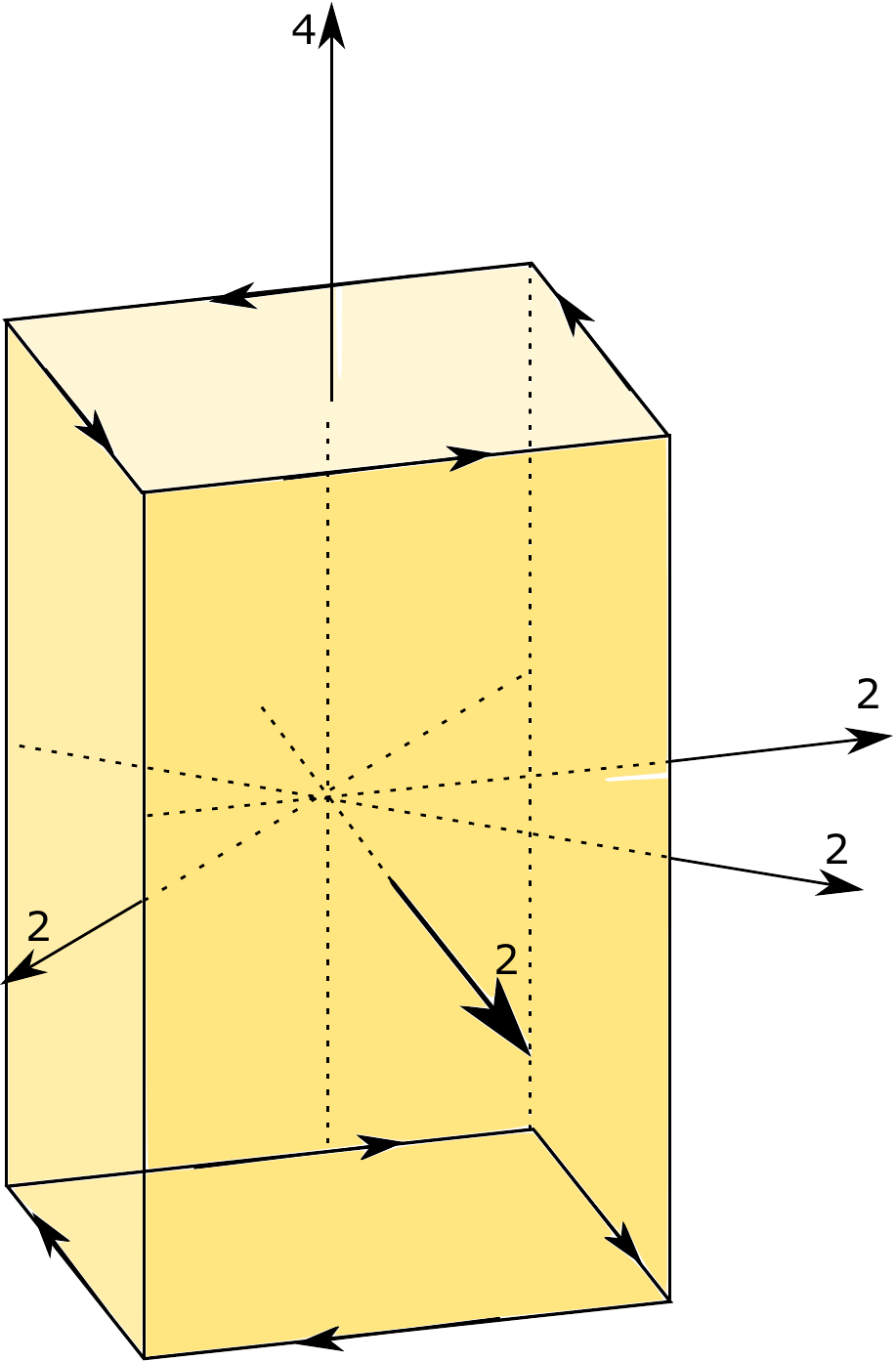}}
	\caption{Type I invariant figures: (A) non regular oriented tetrahedron, $\ZZ_{4}$-invariant (Chiral and Polar), while (B) non cubic twisted rectangular parallelepiped,  $\DD_{4}$-invariant (Chiral).}
	\label{fig:IZD}
\end{figure}

\begin{figure}[H]
	\centering
	\subfloat[]{\includegraphics[scale=0.4]{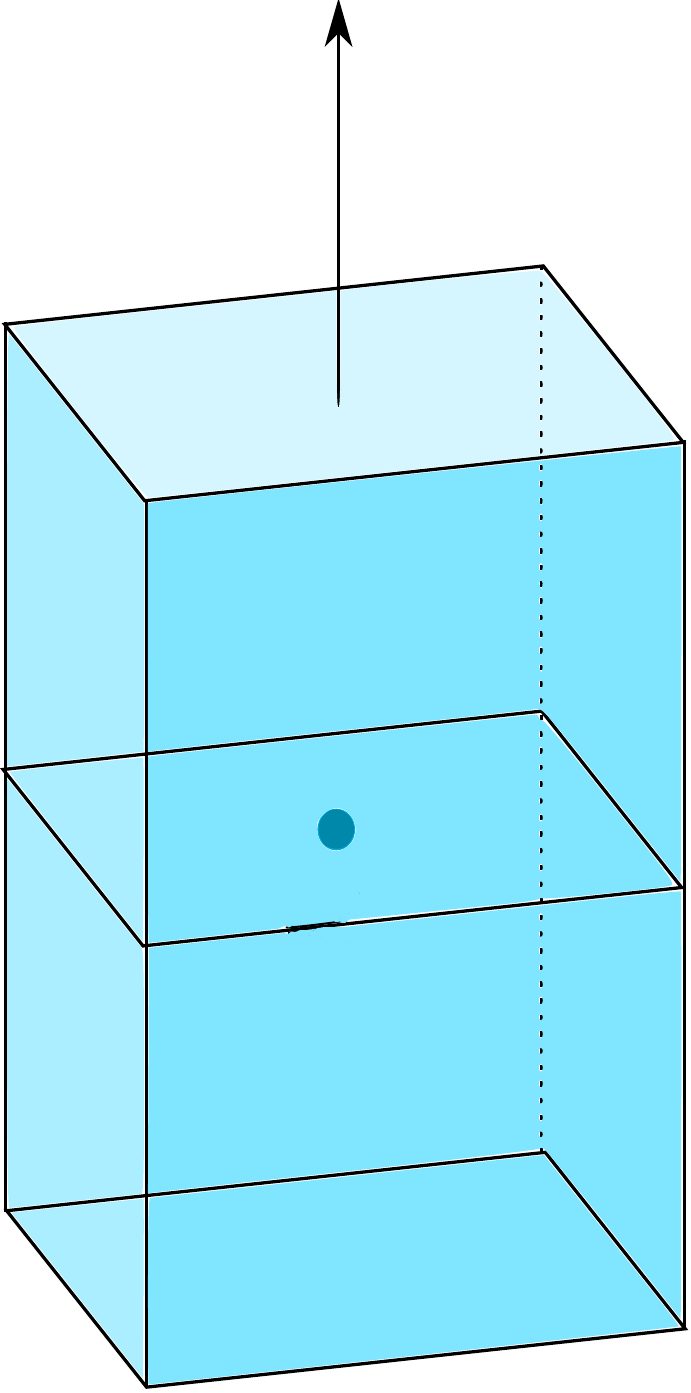}} \hspace{3cm} %
	\subfloat[]{\includegraphics[scale=0.4]{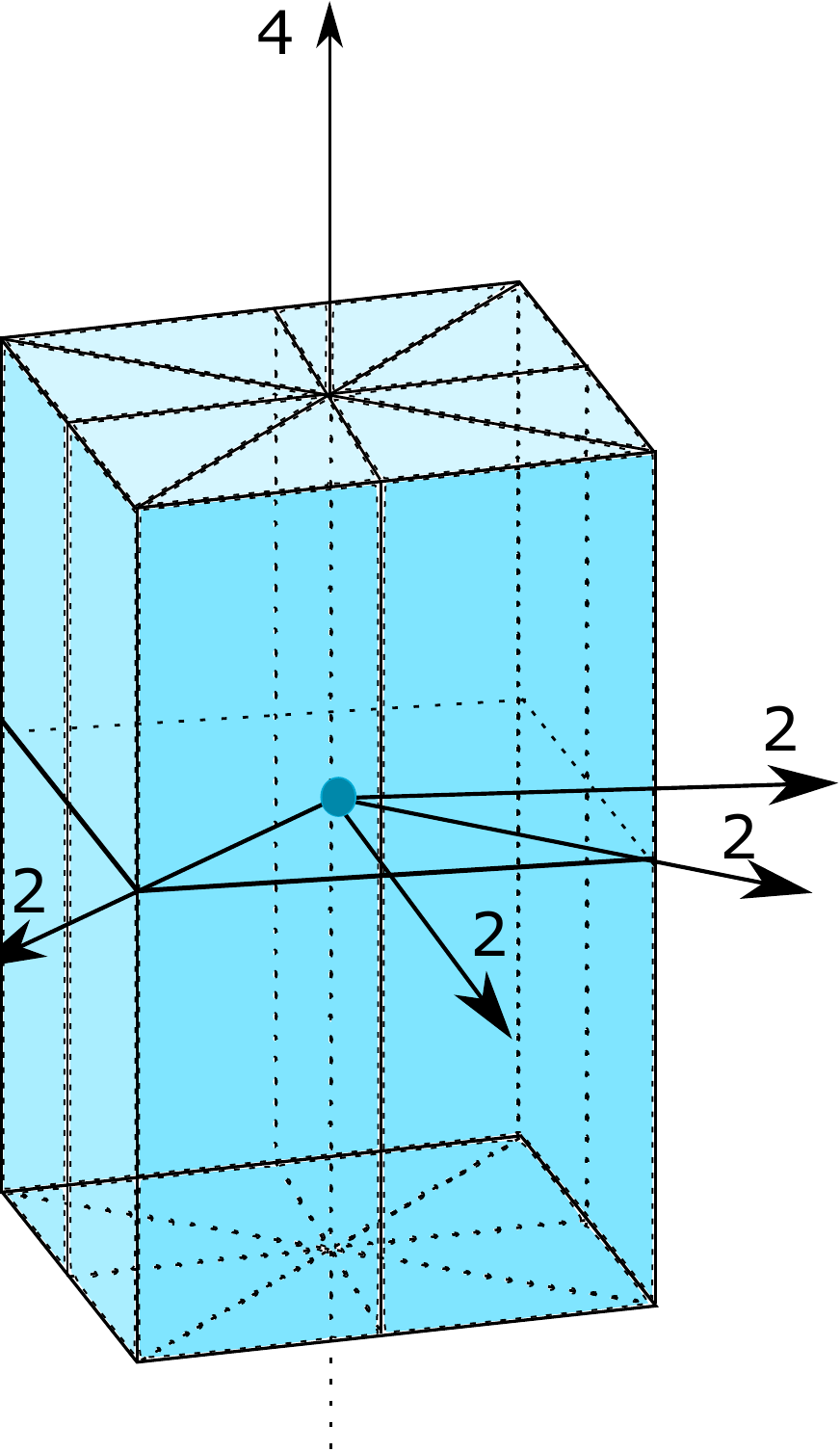}}
	\caption{Different invariant figures of Type II: (A) is $\ZZ_{4}\times\ZZ_2^c$-invariant, while (B) is $\DD_4\times\ZZ_2^c$%
		-invariant. The central inversion is indicated by a dot.}
	\label{fig:IIZD}
\end{figure}

\begin{figure}[H]
	\centering
	\subfloat[]{\label{fig:Z4m}\includegraphics[scale=0.4]{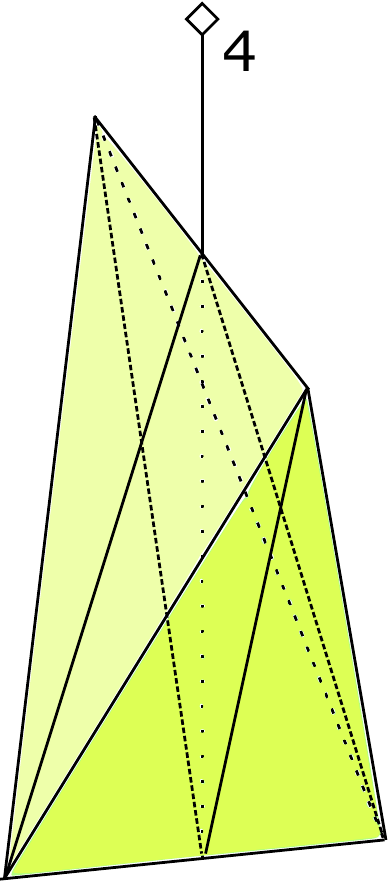}} 
	\hspace{3cm} \subfloat[]{\label{fig:D4v}%
		\includegraphics[scale=0.4]{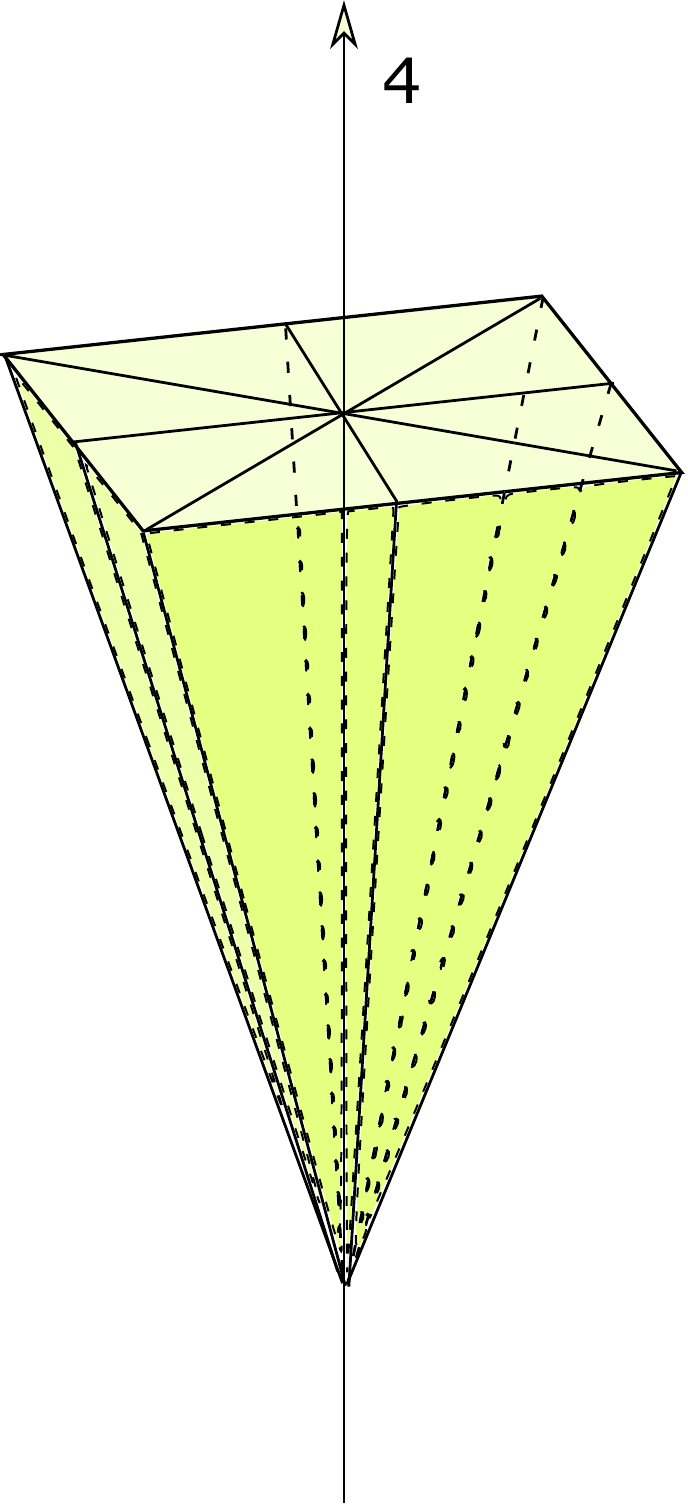}}
	\caption{Type III invariant figures: (A) non regular tetrahedron, $\ZZ_{4}^-$-invariant, while (B) is $\DD_{4}^v$-invariant (Polar). The diamond shape indicates an axis of rotoinversion.}
	\label{fig:III}
\end{figure}
	
\section{Normal forms of piezoelectricity tensors} \label{sec:Norm_Form}

For any piezoelectricity tensor $\bP\in \Pie$ with symmetry class $[H]$, there exists some $g\in \OO(3)$ such that the symmetry group of $\bP_{0}=g\star\bP$ is exactly $H$. The piezoelectricity tensor $\bP_0$ is said to be a \emph{normal form} of $\bP$, which will be given in a \emph{matrix \red{normal} form}. 
To define a matrix representation, let us first observe that any piezoelectric tensor is associated with a linear application from $\Sym^2$ to $\RR^3$, so we can consider its  matrix representation $\MaP$ once the bases of these spaces are given. Let $ \mathcal{B}:=(\ve_{1},\ve_{2},\ve_{3})$  be some orthonormal basis of $\RR^{3}$ and 
\begin{eqnarray*}\label{Basis}
	\ve_{ij}&:=&\left( \frac{1-\delta _{ij}}{\sqrt{2}}+\frac{\delta _{ij}}{2}\right) \left(\mathbf{e}_{i}\otimes \mathbf{e}_{j}+\mathbf{e}_{j}\otimes\mathbf{e}_{i}\right)
\end{eqnarray*}%
be the associated orthonormal basis of $\Sym^2$.  With respect to bases $(\ve_{ij})$ and $(\ve_{i})$, any piezoelectricity tensor $\bP\in\Pie$ can be represented by the matrix
\begin{equation*}
\MaP:=
\begin{pmatrix}
P_{111} & P_{122}  & P_{133}   & \sqrt{2}P_{123}& \sqrt{2}P_{113}&\sqrt{2}P_{112}\\
P_{211} & P_{222}  & P_{233}   & \sqrt{2}P_{223}& \sqrt{2}P_{213}&\sqrt{2}P_{212}\\
P_{311} & P_{322}  & P_{333}   & \sqrt{2}P_{323}& \sqrt{2}P_{313}&\sqrt{2}P_{312}\\
\end{pmatrix}=
\begin{pmatrix}
p_{11} & p_{12} & p_{13}  & p_{14} & p_{15}&p_{16}\\
p_{21} & p_{22} & p_{23}  & p_{24} & p_{25}&p_{26}\\
p_{31} & p_{32} & p_{33} &p_{34} & p_{35}&p_{36}\\
\end{pmatrix}
\end{equation*}

A \emph{matrix normal form} $[\bP_0]$ of a given symmetry class $[H]$ is now obtained from some $\bP_0$ such that $h\star \bP_0=\bP_0$ for all $h\in H$, so that $\bP_0$ belong to the \emph{fixed-point set}
\begin{equation*}
\fix(H):=\set{\bP\in \Pie,\quad h\star\bP=\bP,\quad \forall h\in H} 
\end{equation*}
which is a vector space of dimension given in~\autoref{tab:Dim_Fixed_Point}. In this Table, the dimensions are obtained using the trace formulas detailed in~\autoref{app:Dim_Fixed_Point}.

\begin{table}[H]
	\begin{tabular}{|c|c|c|c|c|}
		\hline
		$[G_{\bP}]$ & $[\triv]$ & $[\ZZ_{2}]$ & $[\ZZ_{3}]$  & $[\SO(2)]$ \\ \hline 
		$\dim(\fix(G_{\bP}))$ & $18$ & $8$ & $6$ & $4$  \\ \hline \hline
		$[G_{\bP}]$ &    & $[\DD^v_{2}]$ & [$\DD^v_{3}]$  & $[\OO(2)^{-}]$  \\ \hline 
		$\dim(\fix(G_{\bP}))$&  & $5$ & $4$ & $3$ \\ \hline \hline
		$[G_{\bP}]$  & $[\ZZ^-_{2}]$ & $[\ZZ^-_{4}]$   &  & \\ \hline 
		$\dim(\fix(G_{\bP}))$ & $10$ & $4$ &  & \\ \hline \hline
		$[G_{\bP}]$ &    & $[\DD^h_{4}]$ & [$\DD^h_{6}]$   &  \\ \hline 
		$\dim(\fix(G_{\bP}))$&  & $2$ & $1$  & \\ \hline \hline
		$[G_{\bP}]$ &    & $[\DD_{2}]$ & [$\DD_{3}]$   & $[\OO(2)]$  \\ \hline 
		$\dim(\fix(G_{\bP}))$&  & $3$ & $2$  & $1$\\ \hline 
		\hline
		$[G_{\bP}]$ & &  & $[\octa^{-}]$  &    $[\OO(3)]$     \\ \hline 
		$\dim(\fix(G_{\bP}))$&  &  & $1$ &$0$ \\ \hline 
	\end{tabular}%
	\caption{Dimension of the fixed-point sets for each  symmetry class of $\Pie$.}
	\label{tab:Dim_Fixed_Point}
\end{table}

A basis of each fixed-point set is computed using generators of the subgroup $H$ from \autoref{tab:TabGen}: if $H$ is generated by some elements $h_i$, then $\fix(H)$ is defined by the linear system 
\begin{equation*}
h_i\star\bP=\bP,\quad \forall i. 
\end{equation*}
In the case of infinite compact subgroup $H$, as a consequence of Herman's theorem~\cite{Her1945,Auf2008a}, generators will be substituted by a rotation of order larger than the tensor's one.
Finally, any basis of $\fix(H)$ produces a matrix normal form of the associated symmetry class $[H]$. In the following, we write $\bP_H$ for an element of $\fix(H)$ and $\MaP_{H}$ for its matrix representation. Note also that the choice of the generators indicated in \autoref{tab:TabGen} has been made in order to have the following relation, which is obtained by direct computation:
\begin{equation*}
     \Pie=\fix(\DD_2)\oplus\fix(\DD_2^v)\oplus\fix(\ZZ^{-}_{2}).
\end{equation*}
This relation means that any triclinic tensors $\bP$ is the sum of three tensors $\bP$ with higher symmetry classes. The consequences of this decomposition are used  to provide matrix representations in a condensed form. Similar relations are obtained for some symmetry classes in order to simplify computations of normal forms.  

\subsubsection*{Symmetry classes $[\ZZ_{2}],[\DD_{2}],[\DD^{v}_{2}]$}

As discussed in \autoref{sec:O3_closed_subgroups}, classes $[\DD_{n}]$ are chiral, classes $[\DD^{v}_{n}]$ are polar while classes $[\ZZ_{n}]$ have these two features. Using generators from \autoref{tab:TabGen} we have:
\begin{equation*}
	\fix(\ZZ_n)=\fix(\DD_n)\oplus\fix(\DD_n^v)\: 
\end{equation*}

So only the matrices of $\left[\bP\right]_{\DD_{n}}$ and $\left[\bP\right]_{\DD^{v}_{n}}$ will be detailed, the remaining one being deduced from the above relation.

\small{
\begin{equation*}
\MaP_{\DD_{2}}=%
\begin{pmatrix}
0 & 0 & 0 &p_{14}&0&0\\
0 & 0 & 0 &0 &  p_{25} &0\\
0 & 0 & 0 &0 & 0 &p_{36}\\
\end{pmatrix}
\  ,\ 
\MaP_{\DD^{v}_{2}}=
\begin{pmatrix}
0 & 0 & 0 &0&p_{15}&0\\
0 & 0 & 0 &p_{24} &  0 &0\\
p_{31} & p_{32} & p_{33} &0 & 0 &0\\
\end{pmatrix},
\end{equation*}
\begin{equation*}
	\MaP_{\ZZ_{2}}=\MaP_{\DD_{2}}+\MaP_{\DD^{v}_{2}}
\end{equation*}}

\subsubsection*{Symmetry classes $[\ZZ_{3}],[\DD_{3}],[\DD^{v}_{3}]$}

\small{
\begin{equation*}
\MaP_{\DD_{3}}=%
\begin{pmatrix}
p_{11}& -p_{11} & 0 &p_{14}&0&0\\
0 & 0 & 0 &0 &  -p_{14} &-\sqrt{2}p_{11} \\
0 & 0 & 0 &0 & 0 & 0 \\
\end{pmatrix}
\  ,\ 
\MaP_{\DD^{v}_{3}}=
\begin{pmatrix}
0 & 0 & 0 &0&p_{24}&\sqrt{2}p_{21}\\
p_{21} & -p_{21} & 0 &p_{24} &  0 &0\\
p_{31} & p_{31}& p_{33} &0 & 0 &0\\
\end{pmatrix},
\end{equation*}
\begin{equation*}
	\MaP_{\ZZ_{3}}=\MaP_{\DD_{3}}+\MaP_{\DD^{v}_{3}}
\end{equation*}}

\subsubsection*{Symmetry classes $[\SO(2)],[\OO(2)],[\OO(2)^{-}]$}

\small{
\begin{equation*}
\MaP_{\OO(2)}=%
\begin{pmatrix}
0 & 0 & 0  &p_{14}&0&0\\
0 & 0 & 0 &0 &  -p_{14} &0 \\
0 & 0 & 0 &0 & 0 & 0 \\
\end{pmatrix}
\  ,\ 
\MaP_{\OO(2)^{-}}=
\begin{pmatrix}
0 & 0 & 0 &0&p_{24}&0\\
0 & 0 & 0  &p_{24} &  0 &0\\
p_{31} & p_{31}& p_{33} &0 & 0 &0\\
\end{pmatrix}
\end{equation*}
\begin{equation*}
	\MaP_{\SO(2)}=\MaP_{\OO(2)}+\MaP_{\OO(2)^{-}}
\end{equation*}}

\subsubsection*{Symmetry Classes $[\ZZ^{-}_{2}],[\ZZ^{-}_{4}],[\DD^{h}_{4}]$}

\small{
\begin{equation*}
\MaP_{\ZZ^{-}_{2}}=%
\begin{pmatrix}
p_{11} & p_{12} & p_{13}  &0&0&p_{16}\\
p_{21} & p_{22} & p_{23}  &0&0&p_{26}\\
0 & 0 & 0 &p_{34} & p_{35}& 0 \\
\end{pmatrix}
\  ,\ 
\MaP_{\ZZ^{-}_{4}}=%
\begin{pmatrix}
0 & 0 & 0 &p_{14}&-p_{24}&0\\
0 & 0 & 0  &p_{24} &  p_{14} &0\\
p_{31} & -p_{31}& 0 &0 & 0 &p_{36}\\
\end{pmatrix}
\end{equation*}
\begin{equation*}
\MaP_{\DD^{h}_{4}}=
\begin{pmatrix}
0 & 0 & 0 &p_{14}&0&0\\
0 & 0 & 0  &0&  p_{14} &0\\
0 & 0& 0 &0 & 0 &p_{36}\\
\end{pmatrix}
\end{equation*}}

\subsubsection*{Symmetry Classes $[\DD^{h}_{6}], [\octa^{-}]$}

\small{
\begin{equation*}
\MaP_{\DD^{h}_{6}}=
\begin{pmatrix}
p_{11} & -p_{11} & 0  &0&0&0\\
0&0&0 &0&0&-\sqrt{2}p_{11} \\
0 & 0 & 0 &0 & 0 & 0\\
\end{pmatrix}
\  ,\ 
\MaP_{\octa^{-}}=
\begin{pmatrix}
0 & 0 & 0 &p_{14}&0&0\\
0 & 0 & 0  &0 &  p_{14} &0\\
0 & 0 & 0  &0 &  0&p_{14} \\
\end{pmatrix}
\end{equation*}}

\begin{rem}\label{rem:Z6-form}
It is worth noting that a $\ZZ^{-}_{6}$-invariant piezoelectricity tensor can be determined:
\begin{equation*}
\MaP_{\ZZ^{-}_{6}}=
\begin{pmatrix}
p_{11} & -p_{11} & 0  &0&0&\sqrt{2}p_{21}\\
p_{21}&-p_{21}&0 &0&0&-\sqrt{2}p_{11} \\
0 & 0 & 0 &0 & 0 & 0\\
\end{pmatrix}\in  \fix(\ZZ_{6}^-)
\end{equation*}
but is not a normal form for $[\ZZ^{-}_{6}]$ since this class is empty. Indeed, taking $\bP^{\star}:=\vR\left(\ve_{3};\theta\right)\star \bP$ gives the matrix:
\begin{equation*}
[\bP^{\star}]_{\ZZ^{-}_{6}}=
\begin{pmatrix}
p^\star_{11} & -p^\star_{11} & 0  &0&0&\sqrt{2}p^\star_{21}\\
p^\star_{21}&-p^\star_{21}&0 &0&0&-\sqrt{2}p^\star_{11} \\
0 & 0 & 0 &0 & 0 & 0\\
\end{pmatrix}
,\quad\text{with}\quad
\begin{cases}
p^\star_{11}=p_{11}\cos(3\theta) - p_{21} \sin(3\theta)\\
p^\star_{21}=p_{21}\cos(3\theta) +p_{11} \sin(3\theta)
\end{cases}
\end{equation*}
Hence $p^\star_{21} =0\  \text{for}\ \theta=-\frac{1}{3}\mathrm{arctan}\left(\frac{p_{21}}{p_{11}}\right)$
showing that there is always a rotation such that
\begin{equation*}
\vR\left(\ve_{3};\theta\right)\star \bP_{\ZZ^{-}_{6}}\in \fix(\DD_6^h).
\end{equation*} 
\end{rem}

\section{Fixed point set dimensions}\label{app:Dim_Fixed_Point}

Consider $(V,\rho)$  a finite dimensional linear representation of $\OO(3)$ which admits the following harmonic decomposition : 
\begin{equation*}
V\simeq \bigoplus_{k=0}^n \alpha_k\mathbb{W}^k\quad\text{with}\  \alpha_k\mathbb{W}^k=\begin{cases}\emptyset,\ \text{if}\ \alpha_k=0\\
\bigoplus_{i=1}^{\alpha_k} \mathbb{W}^k,\ \text{otherwise}
\end{cases}
\end{equation*}
in which $\mathbb{W}^k$ is either  $\HH^k$ or $\HH^{\sharp k}$. Then for the Type I closed-subgroups we have the following formulas:

\begin{equation}
\begin{split}
\text{dim}\left(\fix (\ZZ_{p})\right)& = 2\sum_{k=0}^{n} \alpha_{k}\floor*{\frac{k}{p}}+\sum_{k=0}^{n} \alpha_{k}\\
\text{dim}\left(\fix (\DD_{p})\right)& = \sum_{k=0}^{n} \alpha_{k}\floor*{\frac{k}{p}}+\sum_{k=0}^{\floor*{\frac{n}{2}}}\alpha_{2k}\\
\text{dim}\left(\fix (\tetra) \right)& = \sum_{k=0}^{n} \alpha_{k}\left(2\floor*{\frac{k}{3}}+\floor*{\frac{k}{2}}-k+1\right)\\
\text{dim}\left(\fix (\octa) \right)& = \sum_{k=0}^{n} \alpha_{k}\left(\floor*{\frac{k}{4}}+\floor*{\frac{k}{3}}+\floor*{\frac{k}{2}}-k+1\right)\\
\text{dim}\left(\fix (\ico) \right)& = \sum_{k=0}^{n} \alpha_{k}\left(\floor*{\frac{k}{5}}+\floor*{\frac{k}{3}}+\floor*{\frac{k}{2}}-k+1\right)\\
\end{split}
\end{equation}
In these expressions $\floor*{\cdot}$ denotes the classical floor function. The dimensions for $\fix(\SO(2))$ and $\fix(\OO(2))$ are respectively obtained from the $\text{dim}\left(\fix (\ZZ_{p})\right)$ and $\text{dim}\left(\fix (\ZZ_{p})\right)$ by considering $p>n$, i.e.:
\ben
\text{dim}\left(\fix (\SO(2))\right) = \sum_{k=0}^{n} \alpha_{k}\quad;\quad
\text{dim}\left(\fix (\OO(2))\right) =\sum_{k=0}^{\floor*{\frac{n}{2}}}\alpha_{2k}.
\een

For the  Type III closed-subgroups, the following formulas are 
\begin{align*}
\text{dim}\left(\fix (\ZZ_{2p}^-)\right)&=2\sum_{k=0}^n \alpha_k \floor*{\frac{k+p}{2p}}   \\ \notag
\text{dim}\left(\fix (\DD_{p}^v)\right)&=\sum_{k=0}^{n} \alpha_{k} \floor*{\frac{k}{p}}+ \sum_{k=0}^{\floor*{\frac{n-1}{2}}} \alpha_{2k+1}\\
\text{dim}\left(\fix (\DD_{2p}^h)\right)&=\sum_{k=0}^n \alpha_k \floor*{\frac{k+p}{2p}}   \\ \notag
\text{dim}\left(\fix (\octa^-)\right)&=\sum_{k=0}^n \alpha_k \left(\floor*{\frac{k}{3}}-\floor*{\frac{k}{4}}\right)   \\ \notag
\end{align*}
The dimension for  $\fix(\OO(2)^{-})$ is  obtained from the $\text{dim}\left(\fix (\DD_{p}^v)\right)$ by considering $p>n$, i.e.:
\begin{align*}
\text{dim}\left(\fix (\OO(2)^-)\right)& =\sum_{k=0}^{\floor*{\frac{n}{2}}}\alpha_{2k}\\
\end{align*}

\end{appendix}


\begin{thebibliography}{10}
	
	\bibitem{AB03}
	M.~Ashby and Y.~Br{\'e}chet.
	\newblock Designing hybrid materials.
	\newblock {\em Acta Mater.}, 51(19):5801--5821, 2003.
	
	\bibitem{Auf2008a}
	N.~Auffray.
	\newblock D{\'e}monstration du th{\'e}or{\`e}me d'{H}ermann {\`a} partir de la
	m{\'e}thode {F}orte--{V}ianello.
	\newblock {\em CR M{\'e}canique}, 336(5):458--463, 2008.
	
	\bibitem{Auf15}
	N.~Auffray.
	\newblock On the algebraic structure of isotropic generalized elasticity
	theories.
	\newblock {\em Math. Mech. Solids}, 20(5):565--581, 2015.
	
	\bibitem{Auf2017}
	N.~Auffray.
	\newblock {\em G{\'e}om{\'e}trie des espaces de tenseurs, application {\`a}
		l'{\'e}lasticit{\'e} anisotrope classique et g{\'e}n{\'e}ralis{\'e}e}.
	\newblock {H}{D}{R}, Universit{\'e} Paris-Est, 2017.
	
	\bibitem{AAD20}
	N.~Auffray, H.~Andoul-Anziz, and B.~Desmorat.
	\newblock Explicit harmonic structure of bidimensional strain-gradient
	elasticity.
	\newblock {\em arXiv preprint arXiv:2007.12455}, 2020.
	
	\bibitem{ADR15}
	N.~Auffray, J.~Dirrenberger, and G.~Rosi.
	\newblock A complete description of bi-dimensional anisotropic strain-gradient
	elasticity.
	\newblock {\em Int. J. Solids Struct.}, 69:195--206, 2015.
	
	\bibitem{AKP2014}
	N.~Auffray, B.~Kolev, and M.~Petitot.
	\newblock {O}n {A}nisotropic {P}olynomial {R}elations for the {E}lasticity
	{T}ensor.
	\newblock {\em J. Elast.}, 115(1):77--103, jun 2014.
	
	\bibitem{Bac1970}
	G.~Backus.
	\newblock {A} geometrical picture of anisotropic elastic tensors.
	\newblock {\em Rev. Geophys.}, 8(3):633--671, 1970.
	
	\bibitem{BKP+09}
	Y.~Bai, E.~de~Klerk, D.~Pasechnik, and R.~Sotirov.
	\newblock Exploiting group symmetry in truss topology optimization.
	\newblock {\em Optim. Eng.}, 10(3):331--349, 2009.
	
	\bibitem{Bre1972}
	G.~E. Bredon.
	\newblock {\em {I}ntroduction to compact transformation groups}.
	\newblock Academic Press, New York, 1972.
	\newblock Pure and Applied Mathematics, Vol. 46.
	
	\bibitem{CVC2001}
	P.~Chadwick, M.~Vianello, and S.~C. Cowin.
	\newblock A new proof that the number of linear elastic symmetries is eight.
	\newblock {\em J. Mech. Phys. Solids}, 49(11):2471--2492, 2001.
	
	\bibitem{CG1994}
	P.~Chossat and F.~Guyard.
	\newblock A classification of {$2$}-modes interactions with {${\rm
			SO}(3)$}-symmetry and applications.
	\newblock In {\em Dynamics, bifurcation and symmetry ({C}arg\`ese, 1993)},
	volume 437 of {\em NATO Adv. Sci. Inst. Ser. C Math. Phys. Sci.}, pages
	79--95. Kluwer Acad. Publ., Dordrecht, 1994.
	
	\bibitem{CG1996}
	P.~Chossat and F.~Guyard.
	\newblock Heteroclinic cycles in bifurcation problems with {${\rm
			O}(3)$}-symmetry and the spherical {B}\'enard problem.
	\newblock {\em J. Nonlinear Sci.}, 6(3):201--238, 1996.
	
	\bibitem{CLM1990}
	P.~Chossat, R.~Lauterbach, and I.~Melbourne.
	\newblock Steady-state bifurcation with {${\rm O}(3)$}-symmetry.
	\newblock {\em Arch. Rational Mech. Anal.}, 113(4):313--376, 1990.
	
	\bibitem{Cur1894}
	P.~Curie.
	\newblock Sur la sym{\'e}trie dans les ph{\'e}nom{\`e}nes physiques,
	sym{\'e}trie d'un champ {\'e}lectrique et d'un champ magn{\'e}tique.
	\newblock {\em J. de Phys.}, 3(1):393--415, 1894.
	
	\bibitem{JCG+15}
	M.~De~Jong, W.~Chen, H.~Geerlings, M.~Asta, and K.~A. Persson.
	\newblock A database to enable discovery and design of piezoelectric materials.
	\newblock {\em Sci. Data}, 2(1):1--13, 2015.
	
	\bibitem{DADKO2019}
	R.~Desmorat, N.~Auffray, B.~Desmorat, B.~Kolev, and M.~Olive.
	\newblock Generic separating sets for three-dimensional elasticity tensors.
	\newblock {\em Proc. R. Soc. A}, 475, 2019.
	
	\bibitem{FV1996}
	S.~Forte and M.~Vianello.
	\newblock {S}ymmetry classes for elasticity tensors.
	\newblock {\em J. Elast.}, 43(2):81--108, 1996.
	
	\bibitem{FV1997}
	S.~Forte and M.~Vianello.
	\newblock {S}ymmetry classes and harmonic decomposition for photoelasticity
	tensors.
	\newblock {\em Int. J. Eng. Sci.}, 35(14):1317 -- 1326, 1997.
	
	\bibitem{FGB1998}
	M.~Francois, G.~Geymonat, and Y.~Berthaud.
	\newblock Determination of the symmetries of an experimentally determined
	stiffness tensor: application to acoustic measurements.
	\newblock {\em Int. J. Solids Struct.}, Vol 35.:4091--4106., 1998.
	
	\bibitem{FH1991}
	W.~Fulton and J.~Harris.
	\newblock {\em Representation theory: a first course}, volume 129.
	\newblock Springer Science \& Business Media, 1991.
	
	\bibitem{GEE+19}
	S.~V. Gallego, J.~Etxebarria, L.~Elcoro, E.~S/ Tasci, and J.~M. Perez-Mato.
	\newblock Automatic calculation of symmetry-adapted tensors in magnetic and
	non-magnetic materials: a new tool of the bilbao crystallographic server.
	\newblock {\em Acta Crystallogr. A}, 75(3):438--447, 2019.
	
	\bibitem{GW2002}
	G.~Geymonat and T.~Weller.
	\newblock {S}ymmetry classes of piezoelectric solids.
	\newblock 335(10):847--852, 2002.
	
	\bibitem{GSS1988}
	M.~Golubitsky, I.~Stewart, and David~G. Schaeffer.
	\newblock {\em Singularities and Groups in Bifurcation Theory: Volume {II}},
	volume~69.
	\newblock Springer Science \& Business Media, 1988.
	
	\bibitem{Gur73}
	M.~Gurtin.
	\newblock The linear theory of elasticity.
	\newblock In {\em Linear theories of elasticity and thermoelasticity}, pages
	1--295. Springer, 1973.
	
	\bibitem{Her1945}
	B.~Herman.
	\newblock Some theorems of the theory of anisotropic media.
	\newblock {\em Comptes Rendus (Doklady) Acad. Sci. URSS}, 48(2):89--92, 1945.
	
	\bibitem{HF97}
	J.~Hutchinson and N.~Fleck.
	\newblock Strain gradient plasticity.
	\newblock {\em Adv. Appl. Mech.}, 33:295--361, 1997.
	
	\bibitem{IG1984}
	E.~Ihrig and M.~Golubitsky.
	\newblock Pattern selection with {${\rm O}(3)$}-symmetry.
	\newblock {\em Phys. D}, 13(1-2):1--33, 1984.
	
	\bibitem{JCB1978}
	J.~Jerphagnon, D.~Chemla, and R.~Bonneville.
	\newblock The description of the physical properties of condensed matter using
	irreducible tensors.
	\newblock {\em Adv. Phys.}, 27:609--650, 1978.
	
	\bibitem{LYC+03}
	D.~Lam, F.~Yang, A.~Chong, J.~Wang, and P.~Tong.
	\newblock Experiments and theory in strain gradient elasticity.
	\newblock {\em J. Mech. Phys. Solids}, 51(8):1477--1508, 2003.
	
	\bibitem{Lan1984}
	L.D. Landau and E.M. Lifshitz.
	\newblock {\em Vol. 8 Electrodynamics of Continuous Media}.
	\newblock Pergamon Pr, 1984.
	
	\bibitem{Laz16}
	M.~Lazar.
	\newblock Irreducible decomposition of strain gradient tensor in isotropic
	strain gradient elasticity.
	\newblock {\em ZAMM Z. Angew. Math. Mech.}, 96(11):1291--1305, 2016.
	
	\bibitem{LQH2011}
	H.~Le~Quang and Q.-C. He.
	\newblock {T}he number and types of all possible rotational symmetries for
	flexoelectric tensors.
	\newblock {\em Proc. R. Soc. A}, 467(2132):2369--2386, 2011.
	
	\bibitem{MH1994}
	J.~E. Marsden and T.~Hughes.
	\newblock {\em Mathematical foundations of elasticity}.
	\newblock Courier Corporation, 1994.
	
	\bibitem{Ieee88}
	A.~Meitzler, H.F. Tiersten, A.W. Warner, D.~Berlincourt, G.A. Couqin, and F.S.
	Welsh~III.
	\newblock {IEEE} standard on piezoelectricity, 1988.
	
	\bibitem{Min1964}
	R.~D. Mindlin.
	\newblock Micro-structure in linear elasticity.
	\newblock {\em Arch. Ration. Mech. An.}, 16:51--78, 1964.
	
	\bibitem{NLZ+16}
	S.~Nanthakumar, T.~Lahmer, X.~Zhuang, H.~Park, and T.~Rabczuk.
	\newblock Topology optimization of piezoelectric nanostructures.
	\newblock {\em J. Mech. Phys. Solids}, 94:316--335, 2016.
	
	\bibitem{Nye85}
	J.~F. Nye.
	\newblock {\em Physical properties of crystals: their representation by tensors
		and matrices}.
	\newblock Oxford university press, 1985.
	
	\bibitem{Oli2014a}
	M.~Olive.
	\newblock {\em G\'eom\'etrie des espaces de tenseurs, une approche effective
		appliqu\'ee \`a la m\'ecanique des milieux continus}.
	\newblock PhD thesis, Aix-Marseille Universit{\'{e}}, 2014.
	
	\bibitem{Oli2019}
	M.~Olive.
	\newblock Effective computation of $\mathrm{SO}(3)$- and $\mathrm{O}(3)$-linear
	representation symmetry classes.
	\newblock {\em Math. Mech. Complex Syst.}, 7(3):203--237, 2019.
	
	\bibitem{OA2013}
	M.~Olive and N.~Auffray.
	\newblock {S}ymmetry classes for even-order tensors.
	\newblock {\em Math. Mech. Complex Syst.}, 1(2):177--210, 2013.
	
	\bibitem{OA2014a}
	M.~Olive and N.~Auffray.
	\newblock Isotropic invariants of a completely symmetric third-order tensor.
	\newblock {\em J. Math. Phys.}, 55(9):092901, 2014.
	
	\bibitem{OA2014}
	M.~Olive and N.~Auffray.
	\newblock {S}ymmetry classes for odd-order tensors.
	\newblock {\em ZAMM Z. Angew. Math. Mech.}, 94(5):421--447, 2014.
	
	\bibitem{ODKD2020b}
	M.~Olive, B.~Desmorat, B.~Kolev, and R.~Desmorat.
	\newblock On the determination of plane and axial symmetries in linear
	elasticity and piezo-electricity.
	\newblock {\em J. Elast.}, 141:147--163, 2020.
	
	\bibitem{ODKD2020a}
	M.~Olive, B.~Desmorat, B.~Kolev, and R.~Desmorat.
	\newblock Reduced algebraic conditions for plane or axial tensorial symmetries.
	\newblock {\em Math. Mech. Solids}, 2020.
	
	\bibitem{PMT+19}
	J.~Podest{\'a}, C.~M{\'e}ndez, S.~Toro, and A.~Huespe.
	\newblock Symmetry considerations for topology design in the elastic inverse
	homogenization problem.
	\newblock {\em J. Mech. Phys. Solids}, 128:54--78, 2019.
	
	\bibitem{Roy99}
	D.~Royer and E.~Dieulesaint.
	\newblock {\em Elastic waves in solids I: Free and guided propagation}.
	\newblock Springer Science \& Business Media, 1999.
	
	\bibitem{Sch54}
	J.~A. Schouten.
	\newblock {\em Tensor analysis for physicists}.
	\newblock Courier Corporation, 1954.
	
	\bibitem{SFK97}
	E.~Silva, J.~Fonseca, and N.~Kikuchi.
	\newblock Optimal design of piezoelectric microstructures.
	\newblock {\em Comput. Mech.}, 19(5):397--410, 1997.
	
	\bibitem{Spe70}
	A.~J.~M. Spencer.
	\newblock {A} note on the decomposition of tensors into traceless symmetric
	tensors.
	\newblock {\em Int. J. Eng. Sci.}, 8:475--481, 1970.
	
	\bibitem{Ste1994}
	S.~Sternberg.
	\newblock {\em {G}roup theory and physics}.
	\newblock Cambridge University Press, Cambridge, 1994.
	
	\bibitem{Wel2004}
	T.~Weller.
	\newblock {\em {E}tude des sym{\'e}tries et mod{\`e}les de plaques en
		pi{\'e}zo{\'e}lectricit{\'e} lin{\'e}aris{\'e}e}.
	\newblock PhD thesis, Universit{\'e} Montpellier II, 2004.
	
	\bibitem{YCS20}
	J.~Yvonnet, X.~Chen, and P.~Sharma.
	\newblock Apparent flexoelectricity due to heterogeneous piezoelectricity.
	\newblock {\em J. Appl. Mech.}, 87(11), 2020.
	
	\bibitem{ZB1994}
	Q.-S. Zheng and J.-P. Boehler.
	\newblock The description, classification, and reality of material and physical
	symmetries.
	\newblock {\em Acta Mech.}, 102:73--89, 1994.
	
	\bibitem{Zou2013}
	W.-N. Zou, C.-X. Tang, and E.~Pan.
	\newblock Symmetry types of the piezoelectric tensor and their identification.
	\newblock {\em Proc. R. Soc. A}, 469(2155):20120755, 2013.
	
	\bibitem{ZZD+01}
	W.-N. Zou, Q.-S. Zheng, D.-X. Du, and J.~Rychlewski.
	\newblock Orthogonal irreducible decompositions of tensors of high orders.
	\newblock {\em Math. Mech. Solids}, 6(3):249--267, 2001.
	
\end{thebibliography}
\end{document}